\algnewcommand\algorithmicinput{\textbf{input:}}
\algnewcommand\algorithmicoutput{\textbf{output:}}
\algnewcommand\algorithmicbl{\textbf{\# of blocks:}}
\algnewcommand\algorithmicth{\textbf{\# of threads / block:}}
\algnewcommand\Input{\item[\algorithmicinput]}
\algnewcommand\Output{\item[\algorithmicoutput]}
\algnewcommand\Block{\item[\algorithmicbl]}
\algnewcommand\Thread{\item[\algorithmicth]}
\algrenewcommand\algorithmicindent{1.0em}
\newcommand{\llr}{\ell}
\newcommand{\LLR}{L} %llr matrix
\newcommand{\BUTI}{f} %butterfly index
\newcommand{\TREESYM}{\tau}
\newcommand{\TREEI}[2][]{\TREESYM_{#2\rightarrow}^{#1}}
\newcommand{\TREEF}[2][]{\TREESYM_{\rightarrow#2}^{#1}}
\newcommand{\BO}{\alpha_{out}} % branch output
\newcommand{\BORF}{\hat{\alpha}_{out}}
\newcommand{\BOB}[2]{\alpha_{out,#1}^{#2}} % branch output of butterfly
\newcommand{\BOBRF}[2]{\hat{\alpha}_{out,#1}^{#2}}
\newcommand{\BOBL}[1]{\theta^{#1}}
\newcommand{\BOBLRF}[1]{\hat{\theta}^{#1}}
\newcommand{\BOBMAT}[1]{\Theta_{#1}} % branch output matrix of butterfly
\newcommand{\BOBMATRF}[1]{\hat{\Theta}_{#1}}
\newcommand{\BI}{\alpha_{in}} % branch input
\newcommand{\BM}{\delta}
\newcommand{\BMRF}{\hat{\delta}}
\newcommand{\PM}{\lambda}
\newcommand{\DPM}{\dot\lambda}
\newcommand{\BMMAT}{\Delta}
\newcommand{\PMMAT}{\Lambda}
\newcommand{\DPMMAT}{\dot\Lambda}
\newcommand{\s}{s}
\newcommand{\RADIXI}{\rho}
\newcommand{\BITIDX}{:}
\newcommand{\SP}{\phi}
\newcommand{\K}{k} % constraint length
\newcommand{\B}{\beta} % 1 / code rate
\newcommand{\N}{n} % total length
\newcommand{\F}{f} % frame length
\newcommand{\Ovr}{v} % left+right frame overlap
\newcommand{\BG}{DG}
\newtheorem{theorem}{Theorem}
\newtheorem{corollary}{Corollary}[theorem]
\DeclareMathOperator*{\argmax}{arg~max}
\begin{document}
% paper title
% Titles are generally capitalized except for words such as a, an, and, as,
% at, but, by, for, in, nor, of, on, or, the, to and up, which are usually
% not capitalized unless they are the first or last word of the title.
% Linebreaks \\ can be used within to get better formatting as desired.
% Do not put math or special symbols in the title.
\title{High-Throughput Parallel Viterbi Decoder on GPU Tensor Cores}
%A Gb/s parallel block-based Viterbi decoder for convolutional codes on GPU
%
%
% author names and IEEE memberships
% note positions of commas and nonbreaking spaces ( ~ ) LaTeX will not break
% a structure at a ~ so this keeps an author's name from being broken across
% two lines.
% use \thanks{} to gain access to the first footnote area
% a separate \thanks must be used for each paragraph as LaTeX2e's \thanks
% was not built to handle multiple paragraphs
%

\author{Alireza~Mohammadidoost,~%\IEEEmembership{Student Member,~IEEE,} 
Matin~Hashemi%,~\IEEEmembership{Member,~IEEE}% <-this % stops a space
\thanks{A. Mohammadidoost and M. Hashemi are with the Department of Electrical Engineering, Sharif University of Technology, Tehran, Iran. E-mails: mohammadidoost@ee.sharif.edu, matin@sharif.edu (corresponding author).}}

% The paper headers
%\markboth{IEEE Transactions on }%
%{IEEE Transactions on }
% The only time the second header will appear is for the odd numbered pages
% after the title page when using the twoside option.
% 
% *** Note that you probably will NOT want to include the author's ***
% *** name in the headers of peer review papers.                   ***
% You can use \ifCLASSOPTIONpeerreview for conditional compilation here if
% you desire.

% make the title area
\maketitle

% As a general rule, do not put math, special symbols or citations
% in the abstract or keywords.
\begin{abstract}
Many research works have been performed on implementation of Vitrerbi decoding algorithm on GPU instead of FPGA because this platform provides considerable flexibility in addition to great performance. Recently, the recently-introduced Tensor cores in modern GPU architectures provide incredible computing capability. This paper proposes a novel parallel implementation of Viterbi decoding algorithm based on Tensor cores in modern GPU architectures. 
The proposed parallel algorithm is optimized to efficiently utilize the computing power of Tensor cores. Experiments show considerable throughput improvements in comparison with previous works.
%
%In the API provided by NVIDIA, 16x16 matrices are available. In order to employ as much of the computing capability as possible, as many cells of these matrices as possible should be used. Therefore, a novel representation of trellis called Radix-4 is introduced. After some specific optimizations for a specific Convolutional Code, tensor arrays are almost fully employed and the entire operations of a trellis for two stages can be done in a single operation.
\end{abstract}

% Note that keywords are not normally used for peerreview papers.
\begin{IEEEkeywords}
Convolutional codes, 
Viterbi decoder, 
Software-defined radio (SDR), 
Parallel processing, 
GPU,
CUDA
\end{IEEEkeywords}

% For peer review papers, you can put extra information on the cover
% page as needed:
% \ifCLASSOPTIONpeerreview
% \begin{center} \bfseries EDICS Category: 3-BBND \end{center}
% \fi
%
% For peerreview papers, this IEEEtran command inserts a page break and
% creates the second title. It will be ignored for other modes.
\IEEEpeerreviewmaketitle

\section{Introduction}
\label{sec:intro}
% The very first letter is a 2 line initial drop letter followed
% by the rest of the first word in caps.
% 
% form to use if the first word consists of a single letter:
% \IEEEPARstart{A}{demo} file is ....
% 
% form to use if you need the single drop letter followed by
% normal text (unknown if ever used by the IEEE):
% \IEEEPARstart{A}{}demo file is ....
% 
% Some journals put the first two words in caps:
% \IEEEPARstart{T}{his demo} file is ....
% 
% Here we have the typical use of a "T" for an initial drop letter
% and "HIS" in caps to complete the first word.

%Viterbi
\IEEEPARstart{C}{hannel} coding is a technique that is widely employed in data transmission over an unreliable or noisy communication channel. The transmitter encodes the original message by adding redundancy. This enables the receiver to recover the original data by decoding the received noisy data. 
Convolutional coding is a channel coding method which has been widely used in industrial protocols, for instance, in DVB-T, DVB-S, GPRS, GSM, LTE, 3G, CDMA, WiFi and WiMAX. %\cite{}. 
Different decoding algorithms exist for convolutional codes, among which the Viterbi decoding algorithm is the optimal and the most widely-used method \cite{Viterbi}. 

%Soft-decision mode
The Viterbi decoder operates in either hard-decision mode or soft-decision mode. In the hard-decision mode, every bit in the input of the decoder is represented by either a zero or one. In the soft-decision mode, however, every input bit is a log likelihood ratio (LLR) that is formed based on the probability that the received bit is zero or one. %A larger positive LLR means a larger probability of zero, and a larger negative LLR means a larger probability of one. 
In this mode, the Viterbi decoder takes advantage of the additional information in order to decrease the bit error rate (BER) by about $2$~dB. A lower BER means a better recovery of the original signal. This comes at the cost of higher computational requirement, which in turn, lowers the overall decoding throughput.

Many FPGA-based methods have been proposed for acceleration of the Viterbi decoding algorithm. While such methods achieve very high throughput, they do not provide the flexibility required for software defined radio (SDR) and cognitive radio (CR) applications.

This paper proposes a novel parallel algorithm for implementation of the Viterbi decoding algorithm in the soft-decision mode on GPU hardware. 
Flexibility is a key factor in software defined radio (SDR) and GPU provides a platform for flexible software-based implementations at high throughput.

The proposed solution is mainly focused on optimizing the algorithm to fully employ tensor cores.
% memory requirement. This aspect has not been explored in previous works but has a very large impact in boosting the throughput. The proposed algorithm reaches a very high decoding rate compared to previous solutions. 

%2) Two different methods for executing the backward procedure in parallel.

\section{Preliminaries}
\label{sec:prelim}

This section presents a brief overview of convolutional encoding, Viterbi decoding algorithm, and the concept of soft-decision inputs. In addition, a brief overview of CUDA API for parallel programming on GPU hardware is presented.

%\subsection{Viterbi Decoding Algorithm}
%\label{sec:prelim:vit}

\subsection{Convolutional Encoder} 

Fig.~\ref{fig:encoder}(a) shows an example \textbf{convolutional encoder}. The encoder receives a series of $\N$  bits and produces a series of encoded bits which will be transmitted over the communication channel. 
$\B \geq 2$ encoded bits are generated for every input bit. \textbf{Code rate} is defined as the inverse of $\B$. 
At time (stage) $t$, each one of the $\B$ output bits is computed based on the current input bit, i.e., $in_t$, and the previous $\K-1$ input bits as 
\begin{equation}
(g_{\K-1} . in_{t}) \oplus \cdots \oplus (g_0 . in_{t-\K+1}) 
\label{eq:encoder}
\end{equation}
where, $\K$ is called the \textbf{constraint length}, $\oplus$ is the $xor$ operator, and $g$ is a $\K$-bit value called the \textbf{generator polynomial}. There are $\B$ generator polynomials, one for every output bit. 
In the example shown in Fig.~\ref{fig:encoder}(a), $\K=7$ and $\B=2$. The two generator polynomials are $1111001$ and $1011011$, whose octal representations are $171$ and $133$, respectively.

The encoder can be viewed as a finite state machine (FSM) with $2^{\K-1}$ \textbf{states}. The previous $\K-1$ input bits, i.e., $(in_{t-1}, in_{t-2}, \ldots in_{t-\K+1})$, form the current state. 
Assume the FSM is in state $i \in [0,2^{\K-1}-1]$. Depending on $in_t$, i.e., the current input bit which is either zero or one, the FSM takes a \textbf{branch} $ij$ from state $i$ to state $j$. 
Hence, a series of input bits causes the FSM to take a series of branches, which is called a \textbf{path}.

The encoder FSM can be formed solely based on $\K$, $\B$ and the generator polynomials. Every branch $ij$ in the FSM is associated with a single-bit \textbf{branch input} $\BI^{ij}$ and a $\B$-bit \textbf{branch output} $\BO^{ij}$. See Fig.~\ref{fig:encoder}(b). The output of the encoder is $\BO^{ij}$ provided that the FSM goes from state $i$ to $j$. This transition happens if the input bit is equal to $\BI^{ij}$.

\subsection{Viterbi Decoding Algorithm} 

The encoded bits are transmitted over the channel. When they arrive at the receiver, some of the bits are corrupted. The received bits are decoded in order to recover the original data. 
The Viterbi decoding algorithm is the optimal and the most widely-used method for decoding convolutional codes.% \cite{Viterbi}. 

Given the received series of bits, the decoding algorithm recovers the original data by finding the most probable path swept by the encoder FSM. 
This is done by investigating the likelihood of many different paths. 
Since the number of possible paths grow exponentially, the Viterbi decoding algorithm employs a dynamic programming approach to efficiently find the most probable path as the following. % \cite{Viterbi}. 

%BM
First, for branch $ij$ from state $i$ to state $j$ at stage $t \in [0,\N)$, \textbf{branch metric} $\BM^{ij}_{t}$ is computed as 
\begin{equation}
\BM^{ij}_t = \sum_{b=0}^{\B-1} (-1)^{\displaystyle \BO^{ij}[b]} \times \llr_t[b]
\label{eq:BM}
\end{equation}
where, $\BO^{ij}[b]$ and $\llr_t[b]$ denote bit $b$ of $\BO^{ij}$ and $\llr_t$, respectively. $b \in [0,\B)$. The term $\llr_t$ denotes the received $\B$ bits at the decoder input at time $t$. Basically, $\BM^{ij}_t$ indicates the amount of similarity between the received data and the output of branch $ij$. 
%
%The amount of similarity between a path $\SP$ a the series of received bits can be computed as the sum of all the branch metrics in $\SP$. 
%
%However, since the number of all possible paths grows exponentially with $K$ and the path length, it is not practically possible to compute this metric for all possible paths $\SP$ and select the one with maximum similarity. 
%
Next, for state $j$ at stage $t$, \textbf{path metric} $\PM^j_t$ is computed as 
\begin{equation}
\PM^j_t = \max_{i \in \text{prv}(j)} \big( 
\PM^{i}_{t-1} + \BM^{ij}_t  
\big)
\label{eq:PM}
\end{equation}
where, $i$ iterates over the previous states of state $j$ in the FSM. Note that every state has two previous states, i.e., two input branches. 
The above equation is known as ACS (add, compare and select) operation. Basically, $\PM^j_t$ is formed by accumulating a series of branch metrics which eventually end at state $j$ at stage $t$, and also, have the highest possibility. In other words, $\PM^j_t$ indicates the likelihood of the most probable path which ends at state $j$ till stage $t$.

While computing $\PM^j_t$, the selected previous state $\SP^j_t$ which maximized $\PM^j_t$ needs to be saved as well. It basically indicates one stage of the \textbf{survivor path} which ends at state $j$ at stage $t$. 
\begin{equation}
\SP^j_t = \argmax_{i \in \text{prv}(j)} \big( 
\PM^{i}_{t-1} + \BM^{ij}_t  
\big)
\label{eq:SP0}
\end{equation}
%
%\begin{equation}
%i = \argmax_{i} \big( 
%\PM^{i}_{t-1} + \BM^{ij}_t  
%\big)
%\label{eq:SP1}
%\end{equation}
%%
%\begin{equation}
%\SP^j_t = ( \SP^i_{t-1} , j )
%\label{eq:SP2}
%\end{equation}

The above calculations constitute the forward procedure in the Viterbi decoding algorithm. See Alg.~\ref{alg:serialVit1}. The outer loop iterates through all stages $t \in [0, \N)$ and the inner loop iterates through all states $j \in [0, 2^{\K-1})$.

Once the forward procedure is complete, the backward procedure is performed as the following. 
First, the most probable survivor path at the last stage, i.e., $t=\N-1$, is selected as the winner, and that specific path is traced back to the first stage, i.e., $t=0$. At every stage during the trace-back, the decoder output (which is ideally equal to the original data) is formed based on $\BI^{ij}$. The two operations are called \textbf{trace-back} and \textbf{decoding}. See Alg.~\ref{alg:serialVit2}.

\begin{figure}[tp]
	\centering
	\includegraphics[width=1.0\columnwidth]{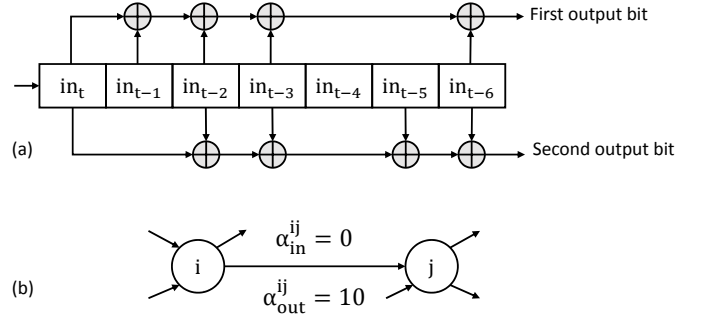} 
	\caption{a) Convolutional encoder $(\B, 1, \K)$ with $\K=7$, $\B=2$, and generator polynomials $1111001$ and $1011011$. b) Branch $ij$ from state $i$ to $j$ in the encoder FSM.}
	\label{fig:encoder}
\end{figure}

\begin{algorithm}[tp]
	\begin{algorithmic}[1]
		\Input $\llr [\B][\N]$ 
		\Output $\SP[2^{\K-1}][\N]$
		
		%\State Allocate matrices $\BM$, $\PM$, and $\SP$ of size $2^{\K-1} \times \N$ 
		\State Initialize $\BM[2^{\K-1}][\N]$ and $\PM[2^{\K-1}][\N]$ to zero
		%\State $\SP[:,-1] = 0$		
		\For{$t=0$ to $\N-1$}
		\For{$j=0$ to $2^{\K-1}-1$}
		\State Set $i'$ and $i''$ equal to the two previous states of $j$
		\State Compute $\BM[i'j,t]$ and $\BM[i''j,t]$ according to \eqref{eq:BM}
		\State $\PM' = \PM[i',t-1] + \BM[i'j,t]$
		\State $\PM'' = \PM[i'',t-1] + \BM[i''j,t]$	
		\If{$\PM' > \PM''$}
		\State $\PM[j,t] = \PM'$
		\State $\SP[j,t] = i'$
		\Else
		\State $\PM[j,t] = \PM''$
		\State $\SP[j,t] = i''$
		\EndIf
		\EndFor
		\EndFor
	\end{algorithmic}
	\caption{\small The first step, i.e., the forward procedure, in the Viterbi decoding method.}
	\label{alg:serialVit1}
\end{algorithm}

\begin{algorithm}[tp]
	\begin{algorithmic}[1]
		\Input $\SP[2^{\K-1}][\N]$
		\Output $Out[\N]$ 
		\State  $j^* = \argmax\limits_{j \in [0,2^{\K-1})} \PM[j,\N-1]$
		\For{$t=\N-1$ to $0$}
		\State $i = \SP[j^*,t]$
		\State $Out[t] = \BI^{ij^*}$ %the MSB bit of $j$
		\State $j^*=i$
		\EndFor
	\end{algorithmic}
	\caption{The second step, i.e., the backward procedure, in the Viterbi decoding method.}
	\label{alg:serialVit2}
\end{algorithm}

%\begin{table*}[tp]
%	\centering
%	\includegraphics[width=0.6\textwidth]{} 
%	\caption{Comparing the proposed method with previous works, in terms of the amount of parallelism made available, and the amount of global memory usage for intermediate data.}
%	\label{fig:comparetable}
%\end{table*}

\subsection{Hard-decision vs. Soft-decision} 

The input to the Viterbi decoding algorithm can be represented in either hard-decision mode or soft-decision mode. In  \textbf{hard-decision} mode, every bit is simply represented by either zero or one. 
In \textbf{soft-decision} mode, however, every bit in the input of the decoder is a \textbf{log likelihood ratio (LLR)} that is formed by the receiver circuit based on the probability that the received bit is zero or one. A larger positive LLR means a larger probability of zero, and a larger negative LLR means a larger probability of one. 
Basically, in addition to the indication of the value of the input bit, the truth of this indication is provided as well.

The Viterbi decoder can take advantage of this additional information in order to better recover the original data. Bit error rate (BER) is lower in the soft-decision mode by about $2$~dB. This comes at the cost of higher computational requirement.

%%%%%%%%%%%%%%%%%%%%%%%%%%%%%%%%%%%%%%%%%%%%%%%%%%%%%
\subsection{CUDA Parallel Programming API}
\label{sec:prelim:cuda}

CUDA is a parallel programming API for Nvidia GPUs. GPU is a massively parallel processor with hundreds to thousands of cores. 
CUDA follows a hierarchical programming model. At the top level, computationally intensive functions are specified by the programmer as CUDA \textbf{kernels}. A kernel is specified as a sequential function for a single \textbf{thread}. The kernel is then launched for parallel execution on the GPU by specifying the number of concurrent threads. 

Threads are grouped into \textbf{blocks}. A kernel consists of a number of blocks, and every block consists of a number of threads. 
%
%Every block has access to a small, on-chip and low-latency memory, called \textbf{shared memory}. The shared memory of a block is accessible to all threads within that block, but not to any thread from other blocks.
%
In order to identify blocks within a kernel, and also, threads within a block, a set of indices are used in the CUDA API, for instance,  $blockIdx.x$ as the block index in dimension $x$ within a kernel, and $threadIdx.x$ as the thread index in dimension $x$ within a block. %For brevity, we denote these two indices as $bx$ and $tx$, respectively.

\section{Previous GPU-Accelerated Viterbi Decoder Methods}
\label{sec:relwork}

The Viterbi decoder algorithm is inherently a sequential procedure and the amount of available parallelism is minimal. 
In specific, calculating the branch metrics is the only step which can be fully parallelized, and path metric calculations can only be partially parallelized using at most $2^{\K-1}$ threads, e.g., $2^{7-1}=64$ threads, where each thread sequentially iterates over $\N$ stages. This approach was proposed in \cite{zhang2009,kim2010ieeecomm}. %See Fig.~\ref{fig:relworks}(a). 
%
%Since there are $2^{\K-1}$ states, the path metrics may be computed using $2^{\K-1}$ threads, where each thread sequentially iterates through $T$ stages. For instance when $K=7$, there would be a total of only $64$ threads. 
%
%This approach was proposed in \cite{zhang2009,kim2010ieeecomm}.

%Input data of length $\N$ is split into $\N/D$ frames. Every frame is processed in parallel to other frames. Every frame decodes $D$ output bits. To provide enough history for correct decoding, every frame has an overlap of length $L$ with its left and right neighbor frames. Hence, to generate $D$ decoded bits as the output, a frame needs to process $D+2L$ stages of the original Viterbi decoding algorithm  \cite{lin2011tiling, gautam2014opencl, lee2013, lee2014, li2013, li2014, peng2016}. 

In order to better utilize the parallel computing capabilities of GPU and increase the throughput, the decoder output can be estimated by dividing the $\N$ stages into many small frames (tiles) of $\F$ stages each \cite{lin2011tiling, gautam2014opencl, lee2013, lee2014}. Every frame is processed in parallel with other frames, and decodes $\F$ out of $\N$ output bits. 
This tiling scheme increases the amount of available parallelism by a factor of $\nicefrac{\N}{\F}$. However, BER is degraded because not all previous history is available in a frame. In order to reduce BER degradation, consecutive frames should have small overlaps in order to carry enough history for correct decoding \cite{lin2011tiling, gautam2014opencl, lee2013, lee2014}. 
As shown in Fig.~\ref{fig:relworks}(b), every frame has an overlap of length $\Ovr$ with its neighbor frames. Hence, to generate $\F$ decoded bits as the output, a frame needs to process $\F+\Ovr$ stages of the original Viterbi decoding algorithm. %In other words, $\N$ is replaced with $\N'$ in Algorithms \ref{alg:serialVit1} and \ref{alg:serialVit2}. %The parallel algorithms in \cite{} proposed the same or very similar tiling schemes.

%Before starting these steps, input bits should be divided into some frames with overlapping parts like the algorithm provided in \cite{pbvd}. In the notation of Alg.~\ref{alg:parallelVit}, $L$ denotes the length of the overlapping parts of two sides of the frame and $D$ denotes the length of the rest of the frame that should be decoded. So the frame length is $D+2L$.

In the forward procedure, the resulting survivor paths need to be stored in GPU global memory for later use in the backward procedure \cite{lin2011tiling,gautam2014opencl,lee2013,lee2014}. The amount of GPU global memory required to store the survivor paths is in the order of 
\begin{equation}
O \Big( 2^{\K-1} \times \N \times (1+\frac{\Ovr}{\F}) \Big)
\end{equation}
because there are $\nicefrac{\N}{\F}$ frames, and every frame requires $O(2^{\K-1} \times (\Ovr+\F))$ space from GPU global memory. 
The parallel algorithms proposed in \cite{li2013,li2014} improved the throughput by judiciously combining the execution of multiple frames in order to coalesce the memory accesses of the survivor paths in global memory. The coalesced memory accesses result in higher throughput.

In \cite{peng2016}, in addition to the tiling scheme and coalescing accesses of survivor paths, branch metrics are efficiently computed according to specific repetitive patterns which help to share computations. In addition, data transfers between CPU and GPU are optimized, in specific, by employing multiple CUDA streams, and by compacting every four input $llr$ values as a $32$-bit value, and every $32$ output decoded bits as a $32$-bit value. %The parallel algorithm in \cite{peng2016} combines all previously-proposed optimization strategies and achieves the highest reported throughput of $1.8$~Gbps using GTX 980 GPU.

%To parallelize the steps of the algorithm, the first approach that comes in mind is that three different kernels should be launched for three steps. at the first one, each thread could be responsible for one branch metric of one frame. Path metric calculation can be parallelized only in the time dimension. Therefore, the second kernel could be comprised of some blocks, each for one frame, with threads in the number of the states. Each thread would be responsible for ACS operation of one state and all threads would go through the stages synchronizedly. For the third step, the traceback for each frame should be done by one thread and so a kernel containing threads in the number of frames could be launched. That is the main approach of all works done before. when launching the steps in different kernels, the output of each step should be stored in the global memory and then loaded in the next step. However, there are some drawbacks in the use of the global memory for the intermediate data that will be analyzed.

%A variation of this tiling scheme is proposed in \cite{li2013,li2014} in which the processing of multiple tiles are combined in order to coalesce the memory accesses of $\SP$ values in GPU global memory. 
%
%Peng et al \cite{peng2016} proposed another variation in which branch metric values are computed according to specific butterfly patterns. 

%\begin{figure}[tp]
%	\centering
%%	\includegraphics[width=0.9\columnwidth]{pic/pt1.jpg} 
%	\caption{Previous works...}
%	\label{fig:relworks}
%\end{figure}

%figure for shift register in radix2
%reordered matrices

\section{butterfly patterns in the trellis}
\label{sec:radix2}

The Viterbi algorithm procedure is represented by a graph called trellis. The trellis has butterfly-like patterns with some features that can be employed to improve the implementation. In this section, such characteristics are explored. 

Based on the shift register shown in Fig. 1(a).... Every state $i$ in stage $t$ in the trellis has two output branches to two separate states $j_0$ and $j_1$ in stage $t+1$. The branches have the following characteristic. For every two successive states $i$ and $i+1$ ($i$ is even) in stage $t$, the four output branches terminate to only two unique states $j_0$ and $j_1$ in stage $t+1$. EXAMPLE. Since every state has only two incoming branches a butterfly pattern is formed. Fig.~\ref{fig:radix2}(a) shows that a butterfly is composed of four branches between two pairs of states in two consecutive stages. Since each butterfly takes two states in a stage, the number of the butterflies is $2^{\K-2}$ which is half of the number of the states as depicted in Fig.~\ref{fig:radix2}(a). In addition, butterflies are isolated sub-graphs constituting a disconnected graph between two successive stages.

Butterflies are indexed from top to bottom and their indexes are denoted by $\BUTI$. Some algorithm parameters like $\BO^{i,j}$ and $\PM_t$ are indexed corresponding to the state and the stage at which they are placed in trellis. In other words, they are indexed based on their position in the trellis. Such indexes that are associated with the position in whole trellis are called global indexes. Global state index and global stage index are two kinds of global indexes that will be referred.

Besides the position in the trellis, the local position in the butterfly can be considered to index parameters of a single butterfly since a butterfly is a 2-state trellis. In this view, local state index and local stage index can be defined. Fig.~\ref{fig:radix2}(b) shows that states of the first local stage have global indexes of $i_0$ and $i_1$ and those of the second local stage have $j_0$ and $j_1$. It is also depicted that other parameters are indexed likewise. Hence, all parameters are indexed based on local indexes and the butterfly index is also mentioned. Since now on, states in the first and second local stages are called left and right states, respectively.

As follows, the relation between local and global state indexes are explored and some other features will be probed.

\begin{theorem}
	\label{theorem:r2:l2g}
	In order to retrieve the value of $i_0$, $i_1$, $j_0$ and $j_1$ in the butterfly $\BUTI$, relations in Eq.~\ref{eq:radix2:S} can be employed.
	\begin{equation}
	\begin{matrix*}[l]
	i_0=2\BUTI & , & j_0=\BUTI\\
	i_1=2\BUTI+1 & , & j_1=\BUTI+2^{\K-2}
	\end{matrix*}
	\label{eq:radix2:S}
	\end{equation}
\end{theorem}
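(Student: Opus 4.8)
The plan is to reduce the statement to one arithmetic identity about the next-state function of the encoder shift register. Recall from Section~\ref{sec:prelim} that the state at a given stage records the previous $\K-1$ input bits $(in_{t-1}, in_{t-2}, \ldots, in_{t-\K+1})$, and that, with the bit-weight convention of Fig.~\ref{fig:encoder}(a) in which the most recent bit $in_{t-1}$ carries the highest weight $2^{\K-2}$, such a tuple is identified with an integer $i \in [0, 2^{\K-1})$. Feeding the next input bit $\BI \in \{0,1\}$ shifts the register: the oldest bit $in_{t-\K+1}$, i.e.\ the least significant bit of $i$, is discarded, the remaining bits move down one position, and $\BI$ is inserted at the top. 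In integer arithmetic this is exactly
\[
j \;=\; \bigl\lfloor i/2 \bigr\rfloor \;+\; \BI \cdot 2^{\K-2}.
\]
I would prove this transition formula first, since it is the only place where the internal structure of Fig.~\ref{fig:encoder}(a) is actually used.

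Next I would read off the butterflies from this formula. For any $\BUTI \in [0, 2^{\K-2})$ the two states $2\BUTI$ and $2\BUTI+1$ have the same successor set, because $\lfloor 2\BUTI/2 \rfloor = \lfloor (2\BUTI+1)/2 \rfloor = \BUTI$, so both send their $\BI=0$ branch to state $\BUTI$ and their $\BI=1$ branch to state $\BUTI + 2^{\K-2}$. Hence the $2^{\K-1}$ left states partition into the $2^{\K-2}$ consecutive pairs $\{0,1\}, \{2,3\}, \ldots$, each pair together with its two shared successors forming an isolated four-branch sub-graph, a butterfly, matching the count and the disjointness already noted in Section~\ref{sec:radix2}. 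Numbering these pairs from top to bottom, butterfly $\BUTI$ is by definition the one whose left states are $i_0 = 2\BUTI$ and $i_1 = 2\BUTI+1$, and whose right states are then $j_0 = \BUTI$ (the $\BI=0$ target, the upper one) and $j_1 = \BUTI + 2^{\K-2}$ (the $\BI=1$ target), which is precisely Eq.~\ref{eq:radix2:S}.

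To finish I would sanity-check the result against the structural facts already in use: inverting the transition formula, a state $j$ has exactly the two predecessors $2(j \bmod 2^{\K-2})$ and $2(j \bmod 2^{\K-2}) + 1$, consistent with ``every state has two previous states''; and since the right-state pairs $\{\BUTI,\, \BUTI + 2^{\K-2}\}$ over $\BUTI \in [0,2^{\K-2})$ also partition $[0, 2^{\K-1})$, the $\BUTI$-indexing is unambiguous on both sides. The main obstacle is not mathematical depth, since once the transition formula is in place the rest is the one-line identity $\lfloor 2\BUTI/2 \rfloor = \lfloor (2\BUTI+1)/2 \rfloor = \BUTI$; it is rather making the bit-ordering convention of Fig.~\ref{fig:encoder}(a) completely explicit, because the opposite convention (inserting the new bit in the least significant position) would instead pair $\BUTI$ with $\BUTI + 2^{\K-2}$ on the left and force a different indexing in the statement.
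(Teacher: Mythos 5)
Your proof is correct and takes essentially the same route as the paper's: both rest on the shift-register observation that the LSB of the left state is discarded while the new input bit enters at the MSB position of weight $2^{\K-2}$, which forces left states to be the consecutive pair $2\BUTI, 2\BUTI+1$ and right states to differ only in the top bit. Your explicit transition formula $j=\lfloor i/2\rfloor+\BI\cdot 2^{\K-2}$ additionally pins down $j_0=\BUTI$ exactly, a detail the paper's informal wording (right states ``$2^{\K-2}$ away from each other'') leaves implicit.
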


\begin{figure}[h]
	\centering
	\includegraphics[width=0.9\columnwidth]{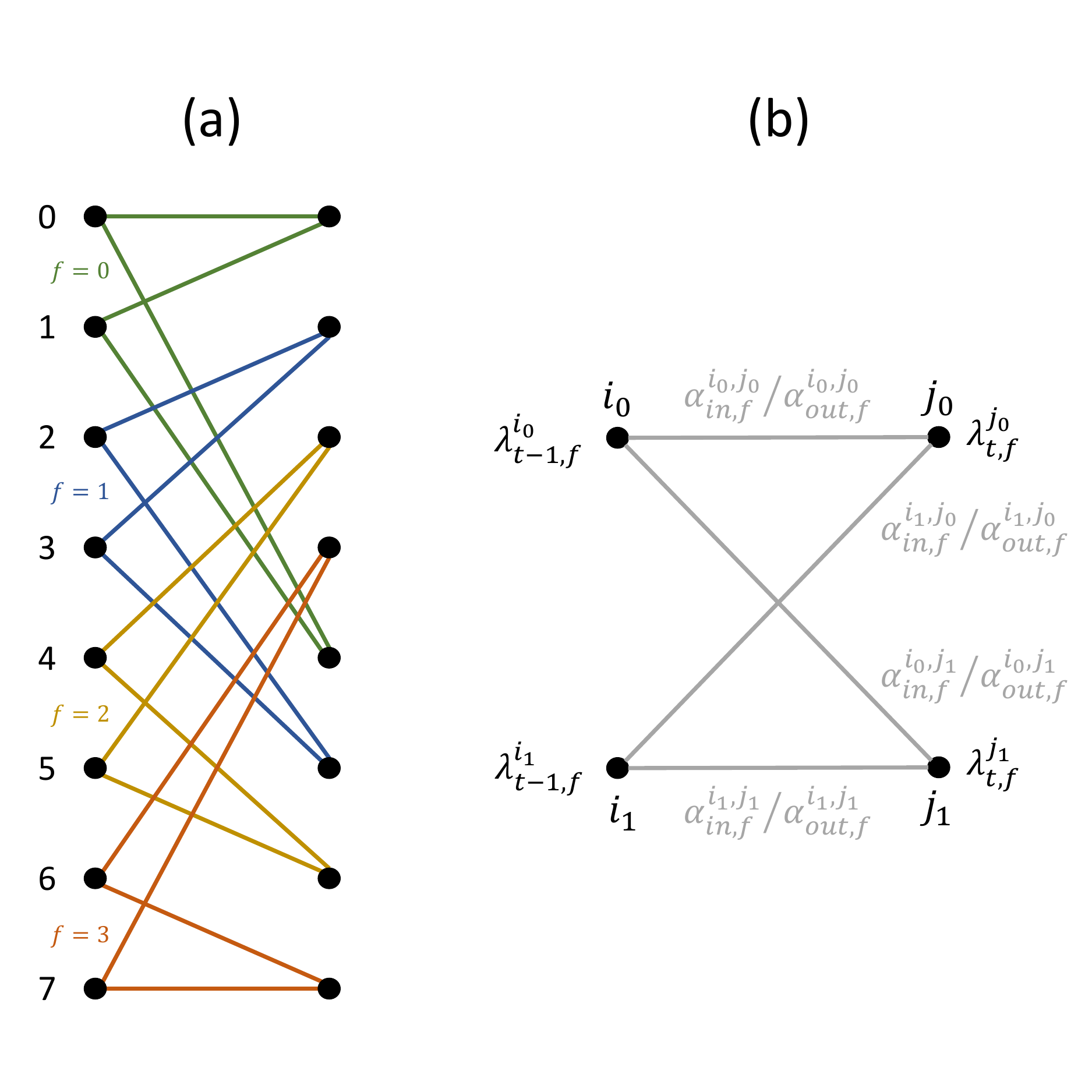} 
	\caption{(a) butterflies in a trellis with 8 states. (b) a single butterfly showing the parameters with butterfly-wise indexing)}
	\label{fig:radix2}
\end{figure}

%\begin{equation}
%\SM =
%\begin{bmatrix}
%2\BUTI & \BUTI\\
%2\BUTI+1 & \BUTI+2^{\K-2}
%\end{bmatrix}
%\label{eq:radix2:S}
%\end{equation}

\begin{proof}
Because the LSB is the bit that will be shifted out in the next stage, its value does not affect the state in the next stage. As a result, the right state is the same regardless of the LSB bit in the left state. Equivalently, two states different in only their LSB bits are left states of a butterfly. That is the reason why left states of a butterfly are two successive states beginning with an even one.

The incoming bit to the shift-register, which is the input bit of the encoder, plays a similar role in destination states of a butterfly. As this incoming bit will be the MSB bit of two right states, right states are different in their MSB bits. Therefore, these two states are $2^{k-2}$ away from each other.
\end{proof}

\begin{theorem}
	All branch outputs of a given butterfly are related to one another. Specifically, the relation between the first branch output and other ones is shown in Eq.~\ref{eq:radix2:outputRelation}.
	\begin{equation}
	\forall i,j<2 \phantom{a} \exists func(x) : \BOBRF{\BUTI}{i,j} = func(\BOBRF{\BUTI}{i_0,j_0})
	\label{eq:radix2:outputRelation}
	\end{equation}
\end{theorem}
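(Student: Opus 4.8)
The plan is to reduce the claim to the $GF(2)$-linearity of the encoder, combined with the bit-level identification of states already supplied by Theorem~\ref{theorem:r2:l2g}. Recall from \eqref{eq:encoder} that each of the $\B$ bits of a branch output is an $xor$ of selected shift-register cells, hence a $GF(2)$-linear function of the $\K$ relevant bits $in_t, in_{t-1}, \dots, in_{t-\K+1}$, the masks being the $\B$ generator polynomials; in this indexing $in_t$ is multiplied by the top bit $g_{\K-1}$ of each polynomial and the about-to-be-shifted-out bit $in_{t-\K+1}$ by the bottom bit $g_0$.

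First I would pin down which of those $\K$ cells actually vary within a fixed butterfly $\BUTI$. By Theorem~\ref{theorem:r2:l2g} the two left states $i_0 = 2\BUTI$ and $i_1 = 2\BUTI+1$ differ only in their least-significant bit --- the cell $in_{t-\K+1}$ that is about to be shifted out --- so the remaining $\K-2$ cells encode $\BUTI$ and are common to all four branches of the butterfly; and the right states $j_0 = \BUTI$, $j_1 = \BUTI+2^{\K-2}$ differ only in their most-significant bit, which is the freshly-shifted-in input bit $\BI = in_t$. Hence the local left index $i\in\{0,1\}$ equals the least-significant left-state cell, and the local right index $j\in\{0,1\}$ equals the branch input. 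Letting $c$ denote the ($\BUTI$-dependent) contribution of the $\K-2$ common cells, linearity gives that bit $b$ of the branch output of local branch $(i,j)$ equals $c[b]\oplus i\,g^{b}_{0}\oplus j\,g^{b}_{\K-1}$, where $g^{b}$ is the $b$-th generator polynomial.

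Taking $(i,j)=(0,0)$ identifies $c$ with the first branch output $\BOBRF{\BUTI}{i_0,j_0}$; substituting back gives, for all $i,j\in\{0,1\}$,
\begin{equation}
\BOBRF{\BUTI}{i,j} \;=\; \BOBRF{\BUTI}{i_0,j_0}\;\oplus\; i\,G_{0}\;\oplus\; j\,G_{\K-1},
\end{equation}
where $G_{0}$ (resp. $G_{\K-1}$) is the $\B$-bit vector collecting the bottom bit $g_0$ (resp. the top bit $g_{\K-1}$) of all $\B$ generator polynomials. This exhibits the required $func$: for each fixed pair $(i,j)$ it is simply ``$xor$ with a fixed mask'', and the same mask serves every butterfly, so the four branch outputs of any butterfly are $\BOBRF{\BUTI}{i_0,j_0}$, $\BOBRF{\BUTI}{i_0,j_0}\oplus G_{0}$, $\BOBRF{\BUTI}{i_0,j_0}\oplus G_{\K-1}$ and $\BOBRF{\BUTI}{i_0,j_0}\oplus G_{0}\oplus G_{\K-1}$. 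The only place demanding care is the bookkeeping that aligns the generator-polynomial bit order in \eqref{eq:encoder} with the state-bit order fixed by Theorem~\ref{theorem:r2:l2g}; once that correspondence is settled the computation above is the entire argument, and it goes through for arbitrary $\K$, $\B$, and generator polynomials.
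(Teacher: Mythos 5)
Your proof is correct and is essentially the paper's own argument: both expand each output bit via the encoder xor of \eqref{eq:encoder}, use Theorem~\ref{theorem:r2:l2g} to identify that only the LSB of the left state (weighted by $g_0$) and the input bit / MSB of the right state (weighted by $g_{\K-1}$) vary within a butterfly, and conclude that since the $(i_0,j_0)$ branch has both of these bits zero it equals the shared middle terms, so every other branch output is that first output xor-ed with a fixed mask built from $g_0$ and $g_{\K-1}$ (the paper's Eqs.~\ref{eq:radix2:boRelatBegin}--\ref{eq:radix2:boRelatEnd}). The only difference is presentational: you phrase it as $GF(2)$-linearity and give the closed-form mask formula, while the paper writes out the four expansions explicitly.
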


\begin{proof}
Based on Eq.~\ref{eq:encoder}, Eq.~\ref{eq:radix2:butterflyOutputBegin} to Eq.~\ref{eq:radix2:butterflyOutputEnd} represent the $b$-th bits of branch outputs of a butterfly, that are associated with the $b$-th generative polynomial $g$, which is a $\K$-bit number. 

\begin{align}
\label{eq:radix2:butterflyOutputBegin}
\BOB{\BUTI}{i_0,j_0}[b] &= (g_{\K-1}.0) \oplus (g_{\K-2}.in_{t-1}) \oplus \cdots \oplus (g_{0}.0) \\
\BOB{\BUTI}{i_0,j_1}[b] &= (g_{\K-1}.1) \oplus (g_{\K-2}.in_{t-1}) \oplus \cdots \oplus (g_{0}.0) \\
\BOB{\BUTI}{i_1,j_0}[b] &= (g_{\K-1}.0) \oplus (g_{\K-2}.in_{t-1}) \oplus \cdots \oplus (g_{0}.1) \\
\BOB{\BUTI}{i_1,j_1}[b] &= (g_{\K-1}.1) \oplus (g_{\K-2}.in_{t-1}) \oplus \cdots \oplus (g_{0}.1)
\label{eq:radix2:butterflyOutputEnd}
\end{align}

Each term concerns a specific input bit from time $t-\K+1$ to $t$. Since state is concatenation of input bits from time $t-\K+1$ to $t-1$, it can be also said that each term concerns either current input bit bit or one bit of state. In each equation, the first term concerns current input bit, which is the MSB bit of the right state, and the last term is related to the LSB bit of state. Regarding the proof of the theorem~\ref{theorem:r2:l2g}, these two bits are known for every branch and they the only differences among states in a butterfly. In other words, the other terms are the same in four equations. In Eq.~\ref{eq:radix2:butterflyOutputBegin}, the first and the last term are both zero. It means that it equals to middle terms that are shared between four equations. So

\begin{align}
\label{eq:radix2:boRelatBegin}
\BOB{\BUTI}{i_0,j_1}[b] &= (g_{\K-1}.1) \oplus \BOB{\BUTI}{i_0,j_0}[b] \oplus (g_{0}.0) \\
\BOB{\BUTI}{i_1,j_0}[b] &= (g_{\K-1}.0) \oplus \BOB{\BUTI}{i_0,j_0}[b] \oplus (g_{0}.1) \\
\BOB{\BUTI}{i_1,j_1}[b] &= (g_{\K-1}.1) \oplus \BOB{\BUTI}{i_0,j_0}[b] \oplus (g_{0}.1)
\label{eq:radix2:boRelatEnd}
\end{align}

\end{proof}

\begin{corollary}
	If the LSB and MSB bits of all generative polynomials are both $1$ (which is the case in most standards such as CCSDS, DVB-S and DVB-T), then
	\begin{equation}
	\BOB{\BUTI}{i_0,j_0} = \BOB{\BUTI}{i_1,j_1} = \overline{\BOB{\BUTI}{i_0,j_1}} = \overline{\BOB{\BUTI}{i_1,j_0}}
	\label{eq:radix2:boRelatNot}
	\end{equation}
\end{corollary}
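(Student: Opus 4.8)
The plan is to specialize the branch-output relations from the preceding theorem (Eqs.~\ref{eq:radix2:boRelatBegin}--\ref{eq:radix2:boRelatEnd}) to the case $g_{\K-1}=g_0=1$ for every generator polynomial, and to do this bit-by-bit, i.e., for each bit position $b \in [0,\B)$. First I would recall that $(g_{\K-1}.1) = g_{\K-1}$ and $(g_0.0)=0$, $(g_0.1)=g_0$ under the assumed hypothesis, since here ``.'' is multiplication of a polynomial coefficient by an input bit. Substituting $g_{\K-1}=1$ and $g_0=1$ into the three relations gives, for each bit $b$:
\begin{align*}
\BOB{\BUTI}{i_0,j_1}[b] &= 1 \oplus \BOB{\BUTI}{i_0,j_0}[b] \oplus 0 = \overline{\BOB{\BUTI}{i_0,j_0}[b]}\\
\BOB{\BUTI}{i_1,j_0}[b] &= 0 \oplus \BOB{\BUTI}{i_0,j_0}[b] \oplus 1 = \overline{\BOB{\BUTI}{i_0,j_0}[b]}\\
\BOB{\BUTI}{i_1,j_1}[b] &= 1 \oplus \BOB{\BUTI}{i_0,j_0}[b] \oplus 1 = \BOB{\BUTI}{i_0,j_0}[b]
\end{align*}
where $\overline{(\cdot)}$ denotes bitwise complement (here a single-bit complement, $1\oplus x$).

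Next I would lift these per-bit identities to identities on the full $\B$-bit branch outputs. Since the relations hold for every bit index $b$ with the same structure (the hypothesis is assumed for all $\B$ generator polynomials), concatenating over $b=0,\dots,\B-1$ yields $\BOB{\BUTI}{i_1,j_1} = \BOB{\BUTI}{i_0,j_0}$, and $\BOB{\BUTI}{i_0,j_1} = \BOB{\BUTI}{i_1,j_0} = \overline{\BOB{\BUTI}{i_0,j_0}}$, where now $\overline{(\cdot)}$ is the bitwise complement of a $\B$-bit word. This is exactly Eq.~\ref{eq:radix2:boRelatNot}, so chaining the equalities (the common value $\BOB{\BUTI}{i_0,j_0}=\BOB{\BUTI}{i_1,j_1}$ equals the complement of the common value $\BOB{\BUTI}{i_0,j_1}=\BOB{\BUTI}{i_1,j_0}$) completes the proof.

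I expect no serious obstacle here; this is essentially a direct corollary obtained by plugging constants into an already-established formula. The only point requiring a little care is being explicit that the argument is per-bit and that the hypothesis ``LSB and MSB of \emph{all} generator polynomials are $1$'' is what lets us apply the same substitution uniformly across all $b$, so that the per-bit complement relation aggregates into a word-level complement relation. I would also briefly note the practical remark already hinted in the statement --- that this condition holds for standard codes such as CCSDS, DVB-S and DVB-T --- but that is context, not part of the proof.
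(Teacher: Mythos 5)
Your proof is correct and follows exactly the route the paper intends: the corollary is left unproved in the text precisely because it is the immediate specialization of Eqs.~\eqref{eq:radix2:boRelatBegin}--\eqref{eq:radix2:boRelatEnd} to $g_{\K-1}=g_0=1$, which is what you carry out bit-by-bit and then aggregate to the $\B$-bit word level. No gaps; your explicit remark that the hypothesis must hold for \emph{all} $\B$ generator polynomials so the per-bit complements assemble into a word-level complement is a welcome bit of care the paper omits.
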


In other words, two outer branches have the same output and two inner branches also have the same outputs that is toggled version of the outer ones.

In the following, we make use of the above theorems in the proposed solution.

\section{Employing Tensor Cores}
\label{sec:tensorRadixII}

\begin{figure*}[t]
	\centering
	\includegraphics[width=0.8\textwidth]{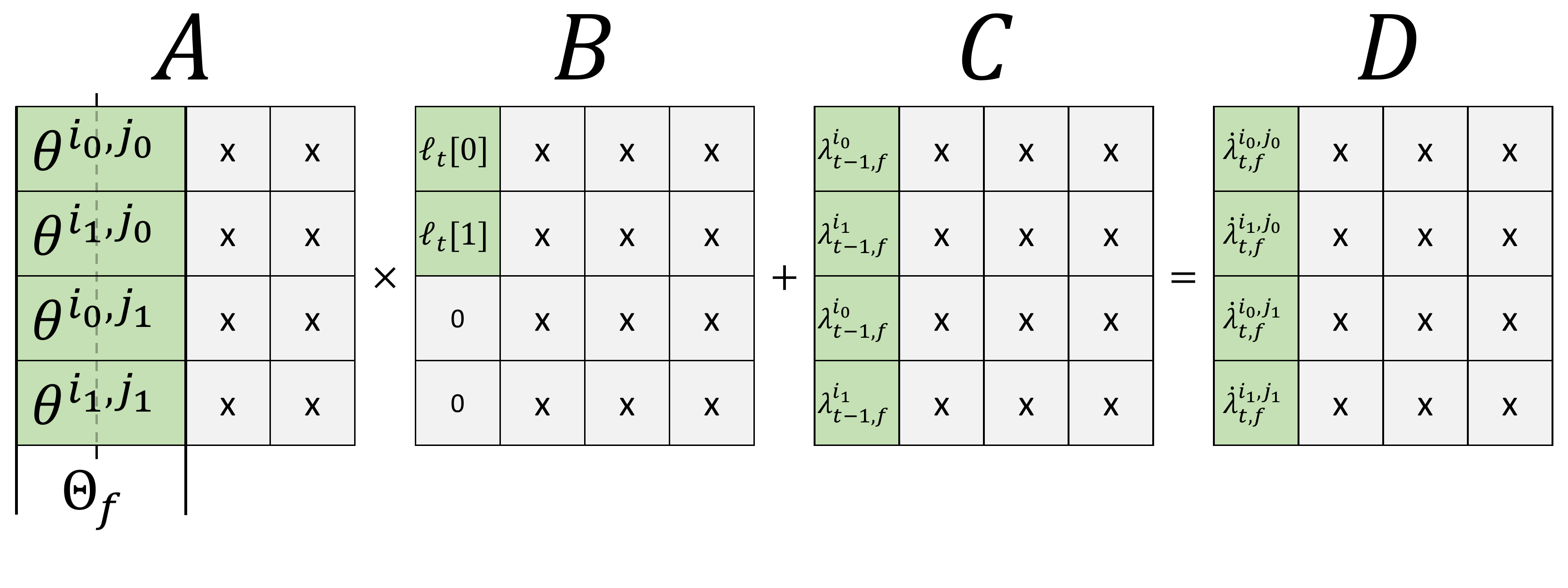} 
	\vskip -6mm
	\caption{Viterbi implementation on $4\times 4$ tensor cores (only one butterfly)}
	\label{fig:tensor4x4:raw}
	%\end{figure*}
	%\begin{figure*}[t]
	\centering
	\includegraphics[width=0.8\textwidth]{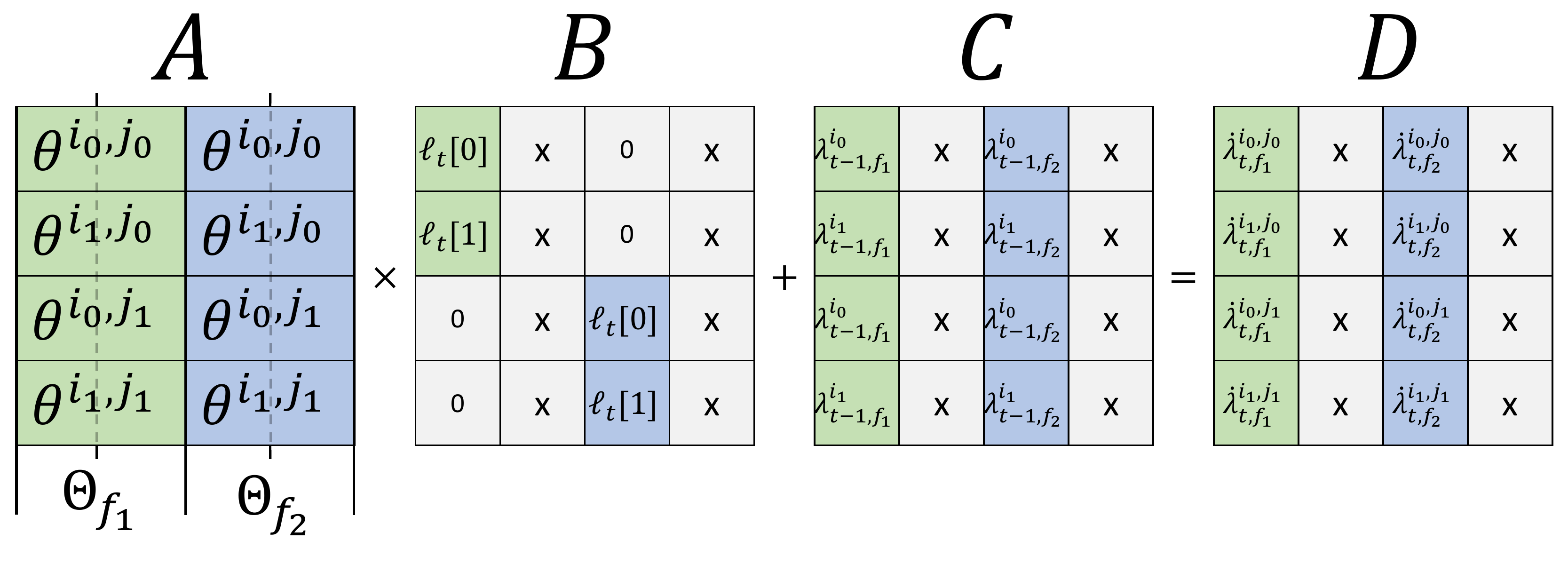} 
	\vskip -3mm
	\caption{Viterbi implementation on $4\times 4$ tensor cores. Two butterflies per 4x4 tensor.}
	\label{fig:tensor4x4:simple}
	%\end{figure*}
	%\begin{figure*}[t]
	\centering
	\includegraphics[width=0.8\textwidth]{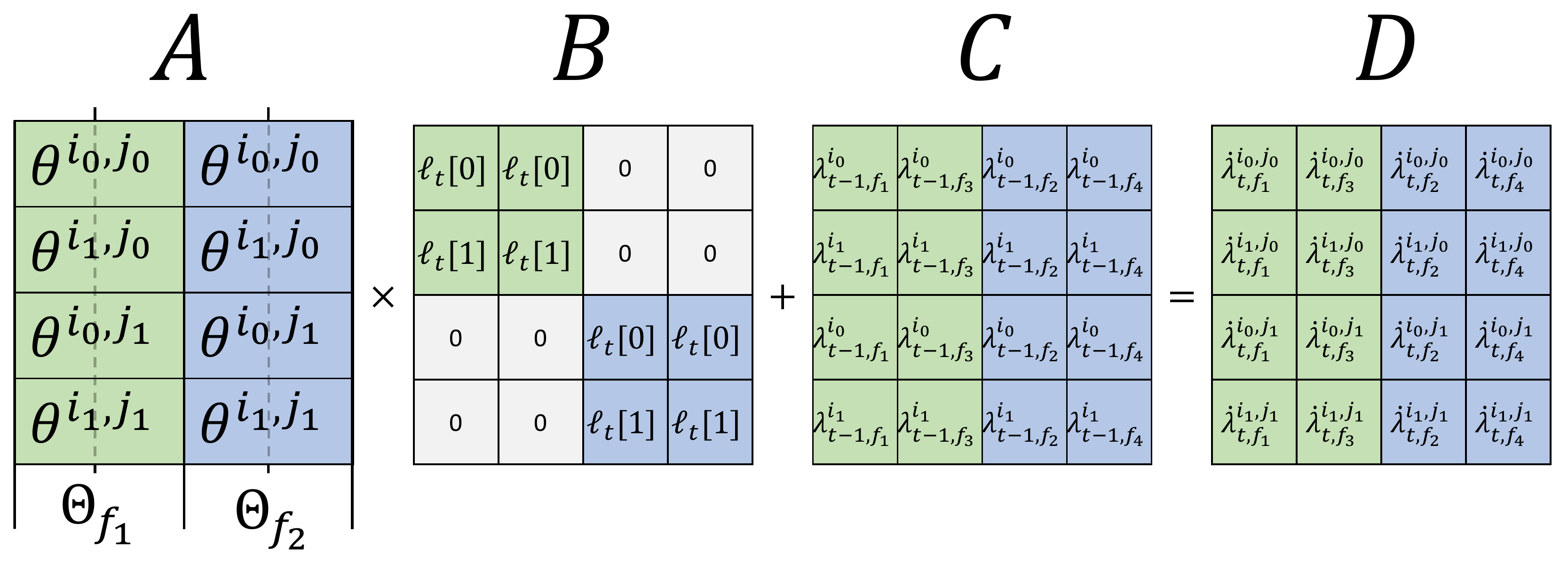} 
	\vskip -3mm	
	\caption{Viterbi implementation on $4\times 4$ tensor cores while using an optimization that employs more entries of the matrices, and also, four butterflies per 4x4 tensor.}
	\label{fig:tensor4x4:opt}
\end{figure*}

%TODO Why use tensor: 
As deep learning is one of the most emergent topics in the research world, processors are shifting to be more compatible with such algorithms. Consequently, NVIDIA, one of the most prestigious companies producing parallel processors, has provided Tensor Cores in its Volta series of GPGPU chips that perform matrix multiplication and accumulation incredibly fast. In addition, it is clear that this technology will be improved over these years since deep learning is hungry for processing power. Therefore, making algorithms compatible with tensor cores is not only a high-throughput implementation but also a futuristic approach.

%TODO Convert the above operations into matrix operations:
\subsection{Tensor Representation}
In order to implement the Viterbi algorithm on tensor cores, the algorithm should be represented in matrix operations. The first steps of the algorithm, branch metric calculation, is an inner product according to Eq.~\ref{eq:BM}. The operands of the product are a tow vectors of size $\B$. One of them includes LLR values received from channel and the other is contains a sequence of $-1$s and $1$s based on the output of the relevant branch. Since matrix multiplication is some inner products between rows of the first operand and columns of the second operand, four branch metrics of each butterfly can be represented as a multiplication of two matrices, as shown in Eq.~\ref{eq:radix2:branchOutMult} where Eq.~\ref{eq:radix2:branchOutMat} shows the first matrix $\BOBMAT{\BUTI}$ and Eq.~\ref{eq:radix2:llrMat} shows the second one $\LLR$. The result is $\BMMAT_{t,\BUTI}$ that is a matrix containing $\BM$ values. 

\begin{equation}
\label{eq:radix2:branchOutMult}
\BMMAT_{t,\BUTI} = \BOBMAT{\BUTI} \times \LLR_t
\end{equation}

Eq.~\ref{eq:radix2:branchOutMat} shows $\BOBMAT{\BUTI}$ that has four rows regarding output bits of four branches of the butterfly $\BUTI$.

\begin{eqnarray}
\label{eq:radix2:branchOutMat}
\BOBMAT{\BUTI} =
\begin{bmatrix}
\BOBL{i_0,j_0}[0] & \cdots & \BOBL{i_0,j_0}[\B-1] \\
\BOBL{i_1,j_0}[0] & \cdots & \BOBL{i_1,j_0}[\B-1] \\
\BOBL{i_0,j_1}[0] & \cdots & \BOBL{i_0,j_1}[\B-1] \\
\BOBL{i_1,j_1}[0] & \cdots & \BOBL{i_1,j_1}[\B-1] \\
\end{bmatrix}\\ 
\BOBL{i,j}[b] = (-1)^{\BOB{\BUTI}{i,j}[b]}
\end{eqnarray}

The matrix $\LLR_t$ is shown in Eq.~\ref{eq:radix2:llrMat}. It contains only one column because the LLR values are shared for all branches.

\begin{equation}
\LLR_t =
\begin{bmatrix}
\llr_t[0] \\
\vdots \\
\llr_t[\B-1] \\
\end{bmatrix}
\label{eq:radix2:llrMat}
\end{equation}

The result of the production $\BMMAT_{t,\BUTI}$ is a vector of size $4$ that consists of four branch metrics of butterfly $\BUTI$. Proceeding the algorithm, each of two left states should be added to two branch metrics and each result will be a potential path metric for one of right states. Thus, each right state will have two potential path metrics that each of them has come from one of left states. As Eq.~\ref{eq:radix2} shows this procedure as matrix operations, $\BMMAT_{t,\BUTI}$ will be added to another matrix containing the path metrics of the left states. The matrix $\PMMAT_{t-1,\BUTI}$ in Eq.~\ref{eq:radix2:PMMat} is declared for this purpose. $\PMMAT_{t-1,\BUTI}$ should be filled according to $\BOBMAT{\BUTI}$. Each row of the matrix $\PMMAT_{t-1,\BUTI}$ should contain the path metric of the left state of the branch associated with the same row of $\BOBMAT{\BUTI}$.  The result of this addition $\DPMMAT_{t,\BUTI}$ is a vector of size $4$ containing potential path metrics for two right states of the butterfly. As each right state has two incoming branches, it has two potential path metrics. Thus, $\DPMMAT_{t,\BUTI}$ has four rows as shown in  Eq.~\ref{eq:radix2:PMMat}. Each row is a potential path metric of a right state, resulted by the branch associated with the same row of $\BOBMAT{\BUTI}$. 

\begin{equation}
\DPMMAT_{t,\BUTI} = \BMMAT_{t,\BUTI} + \PMMAT_{t-1,\BUTI}
\label{eq:radix2}
\end{equation}

\begin{equation}
\PMMAT_{t-1,\BUTI} =
\begin{bmatrix}
\vphantom{\begin{matrix} a \\ a \end{matrix}} \PM_{t-1,f}^{i_0} \\
\vphantom{\begin{matrix} a \\ a \end{matrix}} \PM_{t-1,f}^{i_1} \\
\vphantom{\begin{matrix} a \\ a \end{matrix}} \PM_{t-1,f}^{i_0} \\
\vphantom{\begin{matrix} a \\ a \end{matrix}} \PM_{t-1,f}^{i_1} \\
\end{bmatrix}
\hphantom{aaaa}
\DPMMAT_{t,\BUTI} =
\begin{bmatrix}
\vphantom{\begin{matrix} a \\ a \end{matrix}} \DPM_{t,f}^{i_0,j_0} \\
\vphantom{\begin{matrix} a \\ a \end{matrix}} \DPM_{t,f}^{i_1,j_0} \\
\vphantom{\begin{matrix} a \\ a \end{matrix}} \DPM_{t,f}^{i_0,j_1} \\
\vphantom{\begin{matrix} a \\ a \end{matrix}} \DPM_{t,f}^{i_1,j_1} \\
\end{bmatrix}
\label{eq:radix2:PMMat}
\end{equation}

\begin{equation}
\PMMAT_{t,\BUTI} =
\begin{bmatrix}
\vphantom{\begin{matrix} a \\ a \end{matrix}} \PM_{t,f}^{j_0} \\
\vphantom{\begin{matrix} a \\ a \end{matrix}} \PM_{t,f}^{j_1} \\
\vphantom{\begin{matrix} a \\ a \end{matrix}} \PM_{t,f}^{j_0} \\
\vphantom{\begin{matrix} a \\ a \end{matrix}} \PM_{t,f}^{j_1} \\
\end{bmatrix}
=
\begin{bmatrix}
\vphantom{\begin{matrix} a \\ a \end{matrix}} max(\DPM_{t,f}^{i_0,j_0}, \DPM_{t,f}^{i_1,j_0}) \\
\vphantom{\begin{matrix} a \\ a \end{matrix}} max(\DPM_{t,f}^{i_0,j_1}, \DPM_{t,f}^{i_1,j_1}) \\
\vphantom{\begin{matrix} a \\ a \end{matrix}} max(\DPM_{t,f}^{i_0,j_0}, \DPM_{t,f}^{i_1,j_0}) \\
\vphantom{\begin{matrix} a \\ a \end{matrix}} max(\DPM_{t,f}^{i_0,j_1}, \DPM_{t,f}^{i_1,j_1}) \\
\end{bmatrix}
\label{eq:radix2:PMMatFinal}
\end{equation}

The multiplication and addition mentioned above, are equivalent to the forward procedure of the Viterbi algorithm. Eq.~\ref{eq:radix2} shows whole the computation. After this calculation, the winner path of each state should be selected and stored according to potential final path metrics. This calculation that forms $\PMMAT_{t,\BUTI}$, path metrics of destination states, is shown in Eq.~\ref{eq:radix2:PMMatFinal}. The last step of the algorithm, traceback, should be done afterward. Traceback should be done in its ordinary manner and can not be represented using matrix multiplication.

%\red{%TODO alireza: please remove
%It is worth noting that matrix rows could have been reordered like Eq.~\ref{eq:radix2:reordered} and the order discussed has been preferred over it. The reason behind this preference is that the final result $\DPMMAT_{t,\BUTI}$ is used to calculate the maximum potential path metric of each state and it is more efficient for the metrics of a specific state to be consecutive in terms of memory access time. The selected order satisfies this condition.
%
%\begin{equation}
%\BOBMAT{\BUTI} =
%\begin{bmatrix}
%\BOBL{i_0,j_0} \\
%\BOBL{i_0,j_1} \\
%\BOBL{i_1,j_0} \\
%\BOBL{i_1,j_1} \\
%\end{bmatrix}
%\hphantom{a}
%\PMMAT_{t-1,\BUTI} =
%\begin{bmatrix}
%\PM_{t-1}^{i_0} \\
%\PM_{t-1}^{i_0} \\
%\PM_{t-1}^{i_1} \\
%\PM_{t-1}^{i_1} \\
%\end{bmatrix}
%\hphantom{a}
%\DPMMAT_{t,\BUTI} =
%\begin{bmatrix}
%\DPM_{t}^{i_0,j_0} \\
%\DPM_{t}^{i_0,j_1} \\
%\DPM_{t}^{i_1,j_0} \\
%\DPM_{t}^{i_1,j_1} \\
%\end{bmatrix}
%\label{eq:radix2:reordered}
%\end{equation}
%}

%TODO A,B,C matrices for tensor 4x4:
\subsection{Mapping to Tensor Cores}
The similarity between Eq.~\ref{eq:radix2} and tensor operation makes the Viterbi algorithm implementable on this novel cores. For the purpose of simplicity, a $4\times4$ tensor core is used at first. The matrices $A$, $B$, $C$ and $D$ in a tensor operation are used to contain matrices $\BOBMAT{\BUTI}$, $\LLR_t$, $\PMMAT_{t-1,\BUTI}$ and $\DPMMAT_{t,\BUTI}$, respectively. As an example, Fig.~\ref{fig:tensor4x4:raw} shows how to fill these matrices for $\B=2$. 

In Fig.~\ref{fig:tensor4x4:simple}, two butterflies are processed in a single operation as discriminated with different colors. It can also be noticed some entries do not play any role and they should be assigned a job to reduce the number of unused entries as much as possible.

Eq.~\ref{eq:radix2:boRelatBegin} to Eq.~\ref{eq:radix2:boRelatEnd} imply that the matrix $\BOBMAT{\BUTI}$ can be defined with its first row and the other rows can be obtained using that. Hence, there are $2^{\B}$ distinct valid matrices as $\BOBMAT{\BUTI}$ because it has $\B$ columns. On the other hand, $2^{\K-1} \div 2 = 2^{\K-2}$ butterflies exist in each stage of trellis. It means that some butterflies can have the same $\BOBMAT{\BUTI}$ if $2^{\B} < 2^{\K-2}$. For example, in Viterbi decoder with $k=7$ and $\B=2$ generative polynomials $(171,133)$, $2^{\K-2}=32$ butterflies are uniformly distributed between $2^{\B} = 4$ different valid matrices for $\BOBMAT{\BUTI}$, $32 \div 4 = 8$ butterflies for each matrix. Consequently, empty columns of matrices $B$, $C$ and $D$ can be filled with values associated with another butterfly of the same $\BOBMAT{\BUTI}$ as matrix $A$ as depicted in Fig.~\ref{fig:tensor4x4:opt}. As a result, $4$ butterflies can be processed in a single operation.

NVIDIA has provided developers with its API that employs tensor cores inside kernels. This API, however, enables developers to use not $4\times 4$ tensor cores but $16\times16$ ones. Thus, in order to utilize tensor cores, $4\times4$ approach should be implemented 4 times across the main diagonal of the matrices. In this way, $16$ butterflies are processed in a single $16\times16$ tensor operation. In other words, each stage of the decoder can be performed by invoking tensor operation $2^{\K-2} \div 16 = 2^{\K-6}$ times, i.e., $Q=2^{\K-6}$. For $\K=7$, $Q=2$.

%
%\begin{equation}
%INT_v(M_1, \cdots, M_n) =
%\begin{bmatrix}
%
%\vphantom{
%\begin{matrix}
%a \\ a \\ a
%\end{matrix}}
%
%\left. \begin{matrix}
%m_{1,1}^{1} \\
%\vdots \\	
%m_{1,1}^{n}
%\end{matrix} \right.
%
%& \cdots
%
%& \left. \begin{matrix}
%m_{1,1}^{1} \\
%\vdots \\	
%m_{1,1}^{n}
%\end{matrix} \right.
%
%\\
%
%\vphantom{
%\begin{matrix}
%a \\ a \\ a
%\end{matrix}}
%
%\vdots & \ddots & \vdots \\ 
%
%\vphantom{
%\begin{matrix}
%a \\ a \\ a
%\end{matrix}}
%
%\begin{matrix}
%m_{1,1}^{1} \\
%\vdots \\	
%m_{1,1}^{n}
%\end{matrix}
%
%& \cdots
%
%& \begin{matrix}
%m_{1,1}^{1} \\
%\vdots \\	
%m_{1,1}^{n}
%\end{matrix}
%
%\end{bmatrix}
%\end{equation}
\section{Generalization to Radix-$2^{\RADIXI}$}
\label{sec:radixn}

Radix-$2^{\RADIXI}$ in Viterbi algorithm is a concept like that of FFT algorithm, which is a pattern larger than regular butterflies. A butterfly is a pattern in a trellis, that is one-stage-wide, ranging from stage $t$ to $t+1$. It has $2$ states in each stage, per butterfly. Patterns similar to butterflies but larger can also be found in a trellis. These patterns that are extended versions of butterflies are called dragonflies. Patterns that are A $\RADIXI$-stage-wide are called Radix-$2^{\RADIXI}$ dragonflies. In other words, $\RADIXI$ is width of the dragonfly (the number of stages that it processes and moves forward). $2^{\RADIXI}$ is the number of states in every dragonfly in every stage. Since $2^{\K-1}$ is the number of all states per stage, there are $2^{\K-1} \div 2^{\RADIXI} = 2^{\K-\RADIXI-1}$ dragonflies per stage. Fig.~\ref{fig:radixGeneral} shows three different dragonflies. The green one is a regular butterfly that, in this literature, can be called a Radix-2 dragonfly. This green dragonfly combined with other three blue ones forms a Radix-4 dragonfly that ranges from stage $0$ to $2$ and has $4$ states in every stage. Including brown branches, a Radix-8 dragonfly is depicted.

\begin{figure}[h]
	\centering
	\includegraphics[width=0.9\columnwidth]{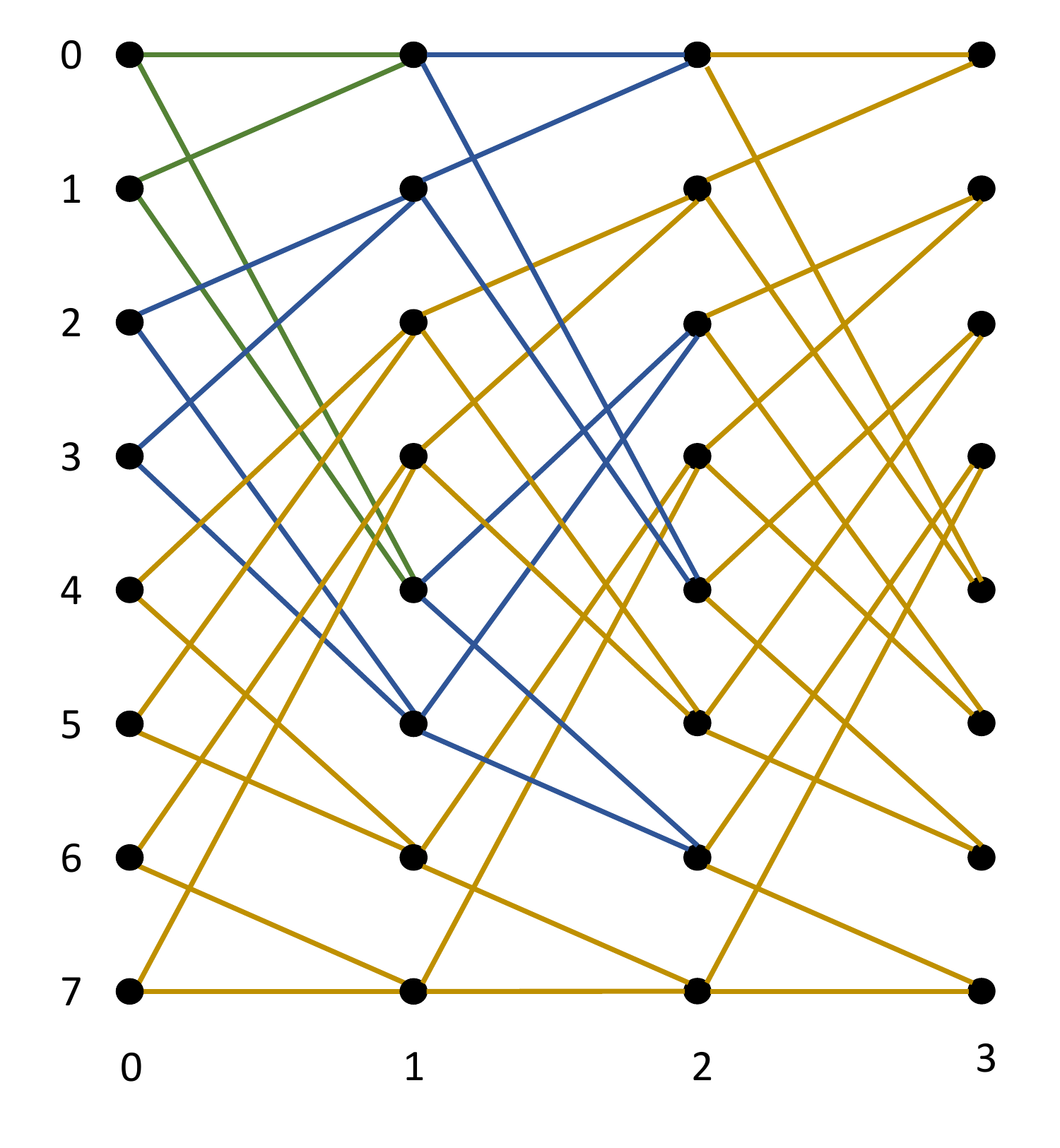} 
	\caption{dragonfly pattern with three different values of $\RADIXI$}
	\label{fig:radixGeneral}
\end{figure}

Assuming that a part of a trellis from stage $t$ to stage $t+\RADIXI$ is extracted, it is a disconnected graph and dragonflies are isolated sub-graphs constituting it. It means that in a Radix-$2^{\RADIXI}$ dragonfly, which has $2^{\RADIXI}$ states in each of $(\RADIXI+1)$ stages from $t$ to $t+1$, there is no path between a pair of states consisting a branch or a state outside the dragonfly. In another perspective, left states of a Radix-$2^{\RADIXI}$ dragonfly are a subset of all states in that stage that will reach a subset of the same size until $\RADIXI$ next stages.

TODO: Higher radix means larger number of stages which eventually leads to more efficiency as we will see in the next section. To this end we need to find a function that help us find global state index in a given position of a dragonfly with a provided index. It means that global state index is a function of local state index and local stage index that represent the local position, and dragonfly index. This relation is the general form of Eq.~\ref{eq:radix2:S}.

Global state index = func ( Dragonfly index, Local state index, Dragonfly stage )

\begin{theorem}
	Let $0 \le b < 2^{\K-\RADIXI-1}$. Every subset of states in stage $t$ defined as $S = \{b+x \mid 0 \le x < 2^{\RADIXI}\}$ contains all left states of Radix-$2^{\RADIXI}$ dragonfly.
\end{theorem}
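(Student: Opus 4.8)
The plan is to lift the $\rho=1$ argument of Theorem~\ref{theorem:r2:l2g} to arbitrary radix by iterating the one-stage state-transition map of the encoder shift register. Writing a state $i$ at a stage as the $(\K-1)$-bit word $(in_{t-1},\ldots,in_{t-\K+1})$ with $in_{t-1}$ the most significant bit, feeding the encoder one further input bit $c$ produces the state $c\cdot 2^{\K-2}+\lfloor i/2\rfloor$; this is exactly the branch relation established for $\rho=1$ in Theorem~\ref{theorem:r2:l2g}, where $j_0=\lfloor i/2\rfloor$ and $j_1=\lfloor i/2\rfloor+2^{\K-2}$.

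First I would iterate this map. A short induction on the number of steps shows that, over the only meaningful range $1\le\rho\le\K-1$, the state reached from $i$ after inputs $c_1,\ldots,c_\rho$ equals $\sum_{\ell=1}^{\rho}c_\ell\,2^{\K-1-\ell}+\lfloor i/2^{\rho}\rfloor$. The consequence I want is that the set of states reachable from $i$ after exactly $\rho$ stages is $\{\,y\,2^{\K-1-\rho}+\lfloor i/2^{\rho}\rfloor : 0\le y<2^{\rho}\,\}$, so it has exactly $2^{\rho}$ elements and depends on $i$ only through $\lfloor i/2^{\rho}\rfloor$ --- i.e.\ through the top $\K-1-\rho$ bits of $i$, so that two states agreeing after deletion of their low $\rho$ bits have identical $\rho$-stage futures.

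Next I would read off the dragonfly structure. Two states at a common stage lie in the same Radix-$2^{\RADIXI}$ dragonfly precisely when their $\rho$-stage reachable sets coincide, which by the previous step means $\lfloor i/2^{\rho}\rfloor=\lfloor i'/2^{\rho}\rfloor$. Therefore the left states of a dragonfly form one class of this relation: all states sharing a fixed value $b:=\lfloor i/2^{\rho}\rfloor$ of their top $\K-1-\rho$ bits, i.e.\ the block $\{\,2^{\rho}b+x : 0\le x<2^{\rho}\,\}$ of $2^{\rho}$ consecutive states beginning at a multiple of $2^{\rho}$. As $b$ ranges over $[0,2^{\K-\rho-1})$ these blocks partition the $2^{\K-1}$ states into the $2^{\K-\rho-1}$ dragonflies, so each set $S$ of the stated form is exactly the set of left states of one Radix-$2^{\RADIXI}$ dragonfly; taking $\rho=1$ recovers Eq.~\ref{eq:radix2:S}.

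The main obstacle I anticipate is not the arithmetic but the definitional bridge --- showing that the isolated-subgraph meaning of ``same dragonfly'' really coincides with the quotient relation $\lfloor i/2^{\rho}\rfloor=\lfloor i'/2^{\rho}\rfloor$. This requires both inclusions: that the $2^{\rho}$ states of a class stay mutually confined through every intermediate stage $t+\ell$ (a free-bit count yields $\ell$ free high bits and $\rho-\ell$ free low bits, i.e.\ $2^{\rho}$ states at each stage, matching the claimed number of dragonflies), and that no state outside the class ever branches into it, so that the induced subgraphs are genuinely disconnected and $S$ is self-contained as the left layer of a single dragonfly.
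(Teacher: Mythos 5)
Your argument is correct and is essentially the paper's own: both proofs rest on the shift-register dynamics over $\RADIXI$ stages --- the low $\RADIXI$ bits are shifted out while $\RADIXI$ fresh input bits are shifted in, so the $\RADIXI$-stage future of a state depends only on its top $\K-1-\RADIXI$ bits, making the blocks of $2^{\RADIXI}$ states sharing those bits the left layers of the dragonflies (the paper phrases this as a stage-by-stage ``free bit'' count, you as a closed-form iterate plus an equivalence-class/partition argument). Your explicit flagging of the disconnectedness bridge (classes confined at intermediate stages, no external in-branches) is if anything more careful than the paper's proof, which leaves that step implicit; the only blemish is the labeling of the $c_\ell$ positions in your iterated formula, which is reversed but immaterial to the reachable-set conclusion.
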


\begin{proof}
	$S$ consists of all states that are different only in their $\RADIXI$ LSB bits and the rest of their binary representations are the same. Thus, $S$ has $\RADIXI$ free bits and it has $2^{\RADIXI}$ members. Considering all branches from these states, they will reach to $S'$, a subset of states in stage $t+1$. Regarding the encoder shift-register in that stage, one MSB bit is shifted in which is a free bit and members of $S'$ can hold both $0$ and $1$ in that position. Meanwhile, one LSB bit is shifted out and one of $\RADIXI$ initial free bits are removed. Therefore, states of $S'$ has still $\RADIXI$ free bits. In other words, the size of $S'$ is $2^{\RADIXI}$ which is the same as $S$. Similarly, in next stages, one free bit is added and one is removed. Until stage $t+\RADIXI$ that all $\RADIXI$ free bits are shifted out, all branches reach to a subset of the same size and it is what a Radix-$2^{\RADIXI}$ dragonfly looks like.
\end{proof}

\begin{corollary}[Model of Bubble and Fluid]
	Fig.~\ref{fig:bubble} shows paths of a Radix-$2^{\RADIXI}$ dragonfly in the perspective of encoder shift-register. In all dragonflies, a fixed part is moving from the first position in stage $t$ to the last position in stage $t+\RADIXI$ like a bubble moves in fluid. The fluid is divided into two parts, pre-bubble and post-bubble, and in spite of the bubble, can hold all possible values per stage.
\end{corollary}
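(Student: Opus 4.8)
The plan is to read the corollary off directly from the shift-register dynamics already used in the proof of the preceding theorem, and then simply recast those dynamics in the ``bubble in fluid'' language. First I would fix notation: at stage $t$ a state is the content of the $(\K-1)$-bit shift register, namely $(in_{t-1}, in_{t-2}, \ldots, in_{t-\K+1})$, with $in_{t-1}$ the most significant (leftmost) bit and $in_{t-\K+1}$ the least significant (rightmost) bit. By the theorem, the left states of a Radix-$2^{\RADIXI}$ dragonfly are exactly the states whose $\RADIXI$ low-order bits range freely while the remaining $\K-1-\RADIXI$ high-order bits are fixed (determined by the dragonfly index). I would name that fixed block of $\K-1-\RADIXI$ bits the \emph{bubble}: at stage $t$ it occupies the first $\K-1-\RADIXI$ positions, and the $\RADIXI$ free bits lie entirely to its right.

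Next I would follow the bubble across the $\RADIXI$ forward steps. Advancing one stage pushes a fresh input bit into the leftmost position and drops the rightmost bit; the bubble bits are merely carried along, so relative to the register the bubble slides one position rightward per stage. Hence at stage $t+s$ (for $0 \le s \le \RADIXI$) the bubble sits in positions $s+1$ through $s+(\K-1-\RADIXI)$; at $s=\RADIXI$ it has reached the last $\K-1-\RADIXI$ positions --- ``from the first position in stage $t$ to the last position in stage $t+\RADIXI$'', as claimed. The complementary positions form the \emph{fluid}: the $s$ positions above the bubble hold the just-shifted-in input bits $in_t,\ldots,in_{t+s-1}$ (the \emph{pre-bubble} part), and the $\RADIXI-s$ positions below it hold the original low-order bits not yet shifted out (the \emph{post-bubble} part).

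Finally I would verify that the fluid ``can hold all possible values per stage.'' The post-bubble bits are free by the theorem, and the pre-bubble bits are free because $in_t,\ldots,in_{t+s-1}$ are arbitrary encoder inputs; since a dragonfly is an isolated subgraph over stages $t$ to $t+\RADIXI$ (established earlier), every one of the $2^{\RADIXI}$ assignments to the $s$ pre-bubble bits together with the $\RADIXI-s$ post-bubble bits is realized by some path through the dragonfly. So at every intermediate stage the dragonfly's $2^{\RADIXI}$ states are precisely the strings with the bubble frozen and the fluid ranging over all values, which is exactly the corollary.

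I do not expect a genuine obstacle: the statement is a reinterpretation of the theorem rather than a new fact. The only thing requiring care is the bit-position bookkeeping --- keeping the MSB/LSB convention consistent, verifying that the bubble migrates toward the LSB end (because it is the LSB that is discarded each stage), and confirming that the pre-bubble count $s$ and post-bubble count $\RADIXI-s$ always sum to $\RADIXI$ so the fluid indeed spans all $2^{\RADIXI}$ possibilities at every stage.
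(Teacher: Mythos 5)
Your proposal is correct and matches the paper's reasoning: the paper offers no separate proof of this corollary, treating it as a direct restatement of the preceding theorem's shift-register argument (one free bit shifted in at the MSB, one shifted out at the LSB per stage, fixed bits carried along), which is exactly the bookkeeping you carry out with the bubble as the fixed block sliding toward the LSB and the pre/post-bubble free bits summing to $\RADIXI$ at every stage. Your version is simply a more explicit, position-indexed write-up of the same idea.
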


In order to obtain global state indexes, two operators should be defined. The first operator shown in Eq.~\ref{eq:bit} extracts a portion of binary representation of the number $x$. In Eq.~\ref{eq:bit}, the portion is defined with its first bit index $b$ and the first index after it $a$. Bit indexes increase from LSB to MSB and start with one.

\begin{equation}
x_{b\BITIDX a} = extracted~portion~of~x~from~~bit~a~to~bit~b\\%\Bigl\lfloor \frac{x \mod 2^{i+1}}{2^{j}} \Bigr\rfloor \\
\label{eq:bit}
\end{equation}

%\begin{equation}
%\forall b \ge a \ge 0 :  x_{b\BITIDX a} = 0
%\label{eq:bit2}
%\end{equation}

For example, assuming $39$ as $x$ with binary representation $100111$, $x_{4\BITIDX 1}$ will be the portion of the binary representation from $4$ to $2$ that is $011$ and equals to $3$. Similarly, $x_{4\BITIDX 0}$ will be $0111$ or $7$.

%\begin{equation}
%x^{i\BITIDX} = x^{i\BITIDX 0}
%\hphantom{aaa}
%x^{\BITIDX j} = x^{\infty \BITIDX j}
%\label{eq:bitExt}
%\end{equation}

Eq.~\ref{eq:shift} shows the other operation that is putting $y$ zeros after the number $x$ in its binary representation. It is similar to left shift instruction.

\begin{equation}
x \ll y = x \times 2^y
\label{eq:shift}
\end{equation}

In the following, a general relation to find global state index will be provided and will be proved.

\begin{theorem}
	Given local stage index $x$, local state index $y$ and dragonfly index $\BUTI$, global state index $s$ is obtained using Eq.~\ref{eq:general:S}.
	\begin{align}
	\forall 0 \le x \le \RADIXI , 0 \le y < 2^{\RADIXI} : \s_{y,x} &= \underbrace{[y_{\RADIXI \BITIDX \RADIXI-x} \ll (\K-x-1)]}_{pre-bubble} \\
	&+ \underbrace{[\BUTI \ll (\RADIXI-x)]}_{bubble} \\ 
	&+ \underbrace{[y_{\RADIXI-x-1 \BITIDX 0}]}_{post-bubble}
	\label{eq:general:S}
	\end{align}
\end{theorem}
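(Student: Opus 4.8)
The plan is to prove Eq.~\ref{eq:general:S} directly, by tracking the contents of the encoder's $(\K-1)$-bit shift register across the $\RADIXI+1$ stages of a dragonfly, so that the formula becomes a bookkeeping statement about which input bit occupies which bit position of the global state index.

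First I would fix a convention: identify the global index of the state at stage $t+x$ with the integer whose $(\K-1)$-bit binary expansion is the register content $(in_{t+x-1},\,in_{t+x-2},\,\ldots,\,in_{t+x-\K+1})$ written most-significant-bit first. Under this convention, bit position $p$ of that integer (with $p=1$ the LSB, matching the indexing of Eq.~\ref{eq:bit}) holds $in_{t+x-\K+p}$; this is the single fact needed to locate every bit in what follows.

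Next I would invoke the preceding theorem and the bubble-and-fluid corollary to split the register, at each local stage $x$, into three contiguous blocks. At $x=0$ the left states of dragonfly $\BUTI$ are exactly those whose top $\K-\RADIXI-1$ bits spell $\BUTI$ and whose bottom $\RADIXI$ bits are unconstrained; that fixed block is the bubble. Advancing one local stage admits one fresh unconstrained bit at the MSB end and pushes one of the original free bits out at the LSB end, so after $x$ stages: (i) the bubble $\BUTI$ has slid down by $x$ positions, now occupying positions $\RADIXI-x+1$ through $\K-x-1$, and hence contributes $\BUTI \ll (\RADIXI-x)$ to the global index; (ii) the $x$ freshly admitted bits $in_{t+x-1},\ldots,in_{t}$ occupy the top positions $\K-x$ through $\K-1$ (the pre-bubble part); and (iii) the $\RADIXI-x$ surviving original free bits occupy the bottom positions $1$ through $\RADIXI-x$ (the post-bubble part). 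I would then check that these blocks are pairwise disjoint, contiguous, and together cover all $\K-1$ positions, paying particular attention to the boundary cases $x=0$ and $x=\RADIXI$, where the pre-bubble, respectively post-bubble, block is empty. This index arithmetic is the step that needs the most care and is the main obstacle; everything after it is substitution.

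It then remains to make the labelling by the local state index $y$ explicit: its high $x$ bits are the pre-bubble bits (in the same most-significant-first order) and its low $\RADIXI-x$ bits are the post-bubble bits. With this, $y_{\RADIXI\BITIDX\RADIXI-x}$ equals the pre-bubble value, which by (ii) lives in positions $\K-x,\ldots,\K-1$ and must therefore be left-shifted by $\K-x-1$, yielding the first summand of Eq.~\ref{eq:general:S}; the bubble yields the second summand by (i); and the low bits of $y$ give the post-bubble value, which by (iii) sits unshifted in the lowest positions, the third summand. Adding the three contributions is exactly Eq.~\ref{eq:general:S}. As a final sanity check I would specialize to $\RADIXI=1$, which must reproduce Eq.~\ref{eq:radix2:S}, and to $x=0$ and $x=\RADIXI$, which must reproduce the left-state set $\{(\BUTI\ll\RADIXI)+y\}$ and the right-state set $\{(y\ll(\K-\RADIXI-1))+\BUTI\}$ predicted by the bubble-and-fluid picture.
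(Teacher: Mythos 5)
Your proposal is correct and follows essentially the same route as the paper: both rest on the bubble-and-fluid picture, identifying the bubble with the dragonfly index $\BUTI$ and the $\RADIXI$-bit fluid with the local state $y$, splitting the fluid at stage $t+x$ into $x$ pre-bubble bits and $\RADIXI-x$ post-bubble bits, and reading off Eq.~\ref{eq:general:S} as the shifted concatenation of the three blocks. Your version is merely more explicit than the paper's in the shift-register bookkeeping (bit positions, disjointness and coverage of the blocks, and the boundary cases $x=0$, $x=\RADIXI$), which the paper handles by an ordering argument and reference to the figure.
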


%\begin{figure}[h]
%	\centering
%	\includegraphics[width=0.9\columnwidth]{pic/bubble.jpg} 
%	\caption{visualization of encoder shift-register using bubble and fluid model}
%	\label{fig:bubble}
%\end{figure}

\begin{proof}
	The bubble value is fixed in all states and stages of a Radix-$2^{\RADIXI}$ dragonfly and can be considered as an identifier for a dragonfly. The more this value is the greater all states are and the lower the dragonfly is placed in the trellis. The dragonfly index, $\BUTI$, similarly increases from top to bottom. That is why it can be claimed that the bubble value equals to $\BUTI$.
	
	The fluid that is $\RADIXI$-bit-wide can take all $2^{\RADIXI}$ possible values in each stage of a specific Radix-$2^{\RADIXI}$ dragonfly. Furthermore, there are $2^{\RADIXI}$ states in each stage and the more the value of the fluid is, the lower the state is. As a result, the value of the fluid is the local state of the dragonfly denoted as $y$ and ranging from $0$ at the top to $2^{\RADIXI}-1$ at the bottom.
	
	Fig.~\ref{fig:bubble} shows that in stage $t+x$, $x$ bits of the fluid is dedicated to pre-bubble and the rest is post-bubble. Using the operator defined in Eq.~\ref{eq:bit}, pre-bubble and post-bubble are equal to $y_{\RADIXI \BITIDX \RADIXI-x}$ and $y_{\RADIXI-x-1 \BITIDX 0}$, respectively.
	
	Global state index is a concatenation of pre-bubble, bubble and post-bubble. All three parts are put on their position using the shift operator defined in Eq.~\ref{eq:shift}.
\end{proof}

\begin{theorem}
	The connections of a Radix-$2^{\RADIXI}$ dragonfly is like a $2^{\RADIXI}$-state trellis with $\K=\RADIXI+1$.
\end{theorem}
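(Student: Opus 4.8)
The plan is to read the internal branch structure of a single Radix-$2^{\RADIXI}$ dragonfly straight off the global-state formula of Eq.~\ref{eq:general:S}, and then check that, once everything is rewritten in the local (butterfly-wise) coordinates of that dragonfly, the resulting state-transition rule is literally the shift-register rule of an encoder FSM of constraint length $\RADIXI+1$. The only property of the original trellis I would invoke is the one already contained in Theorem~\ref{theorem:r2:l2g} and in the shift-register picture of Fig.~\ref{fig:encoder}(a): there is a branch from global state $i$ at stage $t$ to global state $j$ at stage $t+1$ precisely when $j=\lfloor i/2\rfloor+u\,2^{\K-2}$ for the encoder input bit $u\in\{0,1\}$ (the new bit enters the MSB, the oldest bit leaves the LSB).

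First I would fix a dragonfly index $\BUTI$ and a local stage $x$ with $0\le x<\RADIXI$, and write the local state $y$ in the bubble-and-fluid decomposition as $y=\mathrm{pre}\cdot 2^{\RADIXI-x}+\mathrm{post}$, where $\mathrm{pre}$ is the top $x$ bits of $y$ (the pre-bubble) and $\mathrm{post}$ is the bottom $\RADIXI-x$ bits (the post-bubble); by Eq.~\ref{eq:general:S} this gives $\s_{y,x}=\mathrm{pre}\cdot 2^{\K-x-1}+\BUTI\cdot 2^{\RADIXI-x}+\mathrm{post}$, with $0\le \mathrm{post}<2^{\RADIXI-x}$. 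Substituting $i=\s_{y,x}$ into the transition rule and simplifying yields $\lfloor \s_{y,x}/2\rfloor+u\,2^{\K-2}=(u\,2^{x}+\mathrm{pre})\cdot 2^{\K-x-2}+\BUTI\cdot 2^{\RADIXI-x-1}+\lfloor \mathrm{post}/2\rfloor$, which I would identify, again via Eq.~\ref{eq:general:S}, as $\s_{y',x+1}$ with $\mathrm{pre}'=u\,2^{x}+\mathrm{pre}$ and $\mathrm{post}'=\lfloor \mathrm{post}/2\rfloor$; folding back to the local index, $y'=\lfloor y/2\rfloor+u\,2^{\RADIXI-1}$. The bubble term $\BUTI\cdot 2^{\RADIXI-x-1}$ matches on both sides automatically, so $\BUTI$ drops out: every dragonfly has the same internal wiring, and since the computation holds for each $x\in\{0,\dots,\RADIXI-1\}$, that wiring is identical at every internal stage.

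The last step is to observe that $y\mapsto y'=\lfloor y/2\rfloor+u\,2^{\RADIXI-1}$ with $u$ free in $\{0,1\}$ is exactly the one-step transition of a $\RADIXI$-bit shift register, i.e.\ of the FSM of a convolutional encoder of constraint length $\RADIXI+1$ (which has $2^{\RADIXI}$ states, the two outgoing branches of state $y$ going to $\lfloor y/2\rfloor$ and $\lfloor y/2\rfloor+2^{\RADIXI-1}$, and the branch with input $u$ going to the corresponding one). Combining this with the fact, already noted in the text, that the slice of the trellis from stage $t$ to stage $t+\RADIXI$ decomposes into the dragonflies as isolated subgraphs, the map $y\leftrightarrow\s_{y,x}$ is a graph isomorphism from one Radix-$2^{\RADIXI}$ dragonfly onto the $\RADIXI$-stage trellis of a $2^{\RADIXI}$-state, constraint-length-$(\RADIXI+1)$ code, and it carries branch inputs to branch inputs; this is the claim.

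I expect the only real work to be the index bookkeeping in the middle paragraph: one must verify that halving $\s_{y,x}$ shifts out exactly the least-significant bit of the post-bubble (always a fluid bit of $y$, never a bit of the bubble) and that the new MSB $u$ lands in the pre-bubble rather than in the bubble, and one must confirm that the boundary cases $x=0$ (empty pre-bubble) and $x=\RADIXI-1$ (empty post-bubble at stage $x+1$) are covered by the same formula. Once this is pinned down, the identification with the constraint-length-$(\RADIXI+1)$ trellis is immediate, and it also explains why the earlier radix-2 relations of Eq.~\ref{eq:radix2:S} are just the $\RADIXI=1$ instance of this statement.
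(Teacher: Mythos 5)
Your proposal is correct and follows the same route as the paper: the bubble term (the dragonfly index $\BUTI$) passes through each transition untouched, so the connections are governed solely by the $\RADIXI$-bit fluid, which evolves exactly as a length-$\RADIXI$ shift register, i.e., as the trellis of a $2^{\RADIXI}$-state code with $\K=\RADIXI+1$. The paper asserts this in one line from the bubble-and-fluid picture, whereas you verify it explicitly from Eq.~\ref{eq:general:S} and the transition rule $j=\lfloor i/2\rfloor+u\,2^{\K-2}$ (including the boundary stages $x=0$ and $x=\RADIXI-1$); your index bookkeeping is sound, so this is a more rigorous rendering of the same argument rather than a different one.
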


\begin{proof}
	connections of states are defined by fluid since it delegates to local states. Therefore, the bubble does not have any effect and can be removed. As a result the length of the shift-register will be $\RADIXI$ which is equivalent to $\K=\RADIXI+1$.
\end{proof}

\section{Radix-4 patterns in the trellis}
\label{sec:radix4}

Implemented on $16 \times 16$ tensor cores, the prior method fills less than half of the matrix entries. Hence, to use most of the matrix entries and consequently most of the tensor core power, a reformed observation of trellis is approached that employs a wider pattern in the branches. This modification employs Radix-4 dragonflies that are related to a specific mode of Radix-$2^{\RADIXI}$ with $\RADIXI=2$. As a result, not only it utilizes $16\times16$ tensor cores to a great extent but it also benefits from other improvements in memory access that will be discussed.

Fig.~\ref{fig:radix4}(a) shows an 8-state trellis with two Radix-4 dragonflies. It can be observed that each Radix-4 dragonfly consists of $4$ butterflies. There are $2^{k-1} \div 4 = 2^{k-3}$ dragonflies per stage since each of them contains $4$ states per stage.

%\red{These Radix-4 dragonflies are indexed like before and denoted as $\BUTR[2]{\BUTI}$ where the superscript number shows the width of the dragonfly that is $2$ in a Radix-4 dragonfly. It should also be noted that the former pattern of the dragonfly in Sec.~\ref{sec:radix2} could be called Radix-2 and could be denoted as $\BUTR[1]{\BUTI}$. The other parameters are denoted like Radix-2 in addition to the number showing the dragonfly width.} 

Fig.~\ref{fig:radix4}(b) shows a sole dragonfly that has connections like a 4-state trellis with states from $0$ to $3$. Global state indexes of left states and right states are denoted by $i$ and $j$ like butterflies and $m$ is used for middle states.

In order to obtain global state indexes from local state indexes in a Radix-4 dragonfly, Eq.~\ref{eq:general:S} should be used. Eq.~\ref{eq:radix4:S} shows the results for all $12$ states where $\BUTI$ is the dragonfly index and symbols used for states are related to Fig.\ref{fig:radix4}(b).

\begin{figure}[h]
	\centering
	\includegraphics[width=0.9\columnwidth]{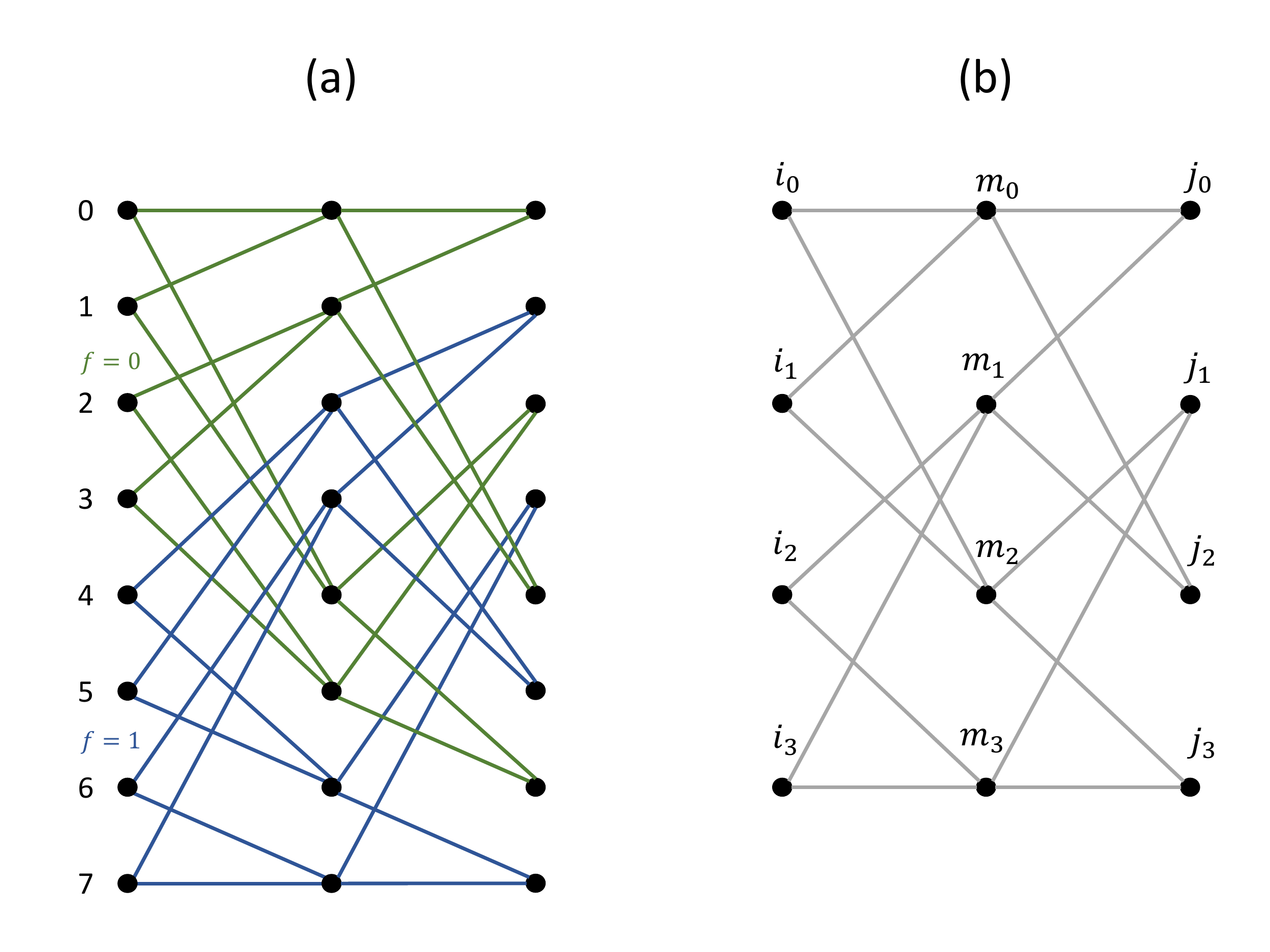} 
	\caption{(a) Radix-4 dragonflies in a trellis with 8 states. (b) a single dragonfly showing the connections and the parameters with dragonfly-wise indexing}
	\label{fig:radix4}
\end{figure}

\begin{equation}
\begin{matrix*}[l]
i_0=4\BUTI		& , & m_0=2\BUTI 			& , & j_0=\BUTI\\
i_1=4\BUTI+1 	& , & m_1=2\BUTI+1		 	& , & j_1=\BUTI+2^{\K-3}\\
i_2=4\BUTI+2 	& , & m_2=2\BUTI+2^{\K-2} 	& , & j_2=\BUTI+2\times2^{\K-3}\\
i_3=4\BUTI+3 	& , & m_3=2\BUTI+2^{\K-2}+1	& , & j_3=\BUTI+3\times2^{\K-3}\\
\end{matrix*}
\label{eq:radix4:S}
\end{equation}

\begin{theorem}
	In a Radix-4 dragonfly, there is only one path between every left state and every right state.
\end{theorem}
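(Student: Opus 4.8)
The plan is to reduce everything to the previously-established fact that a Radix-$4$ dragonfly has the connectivity of a small $4$-state trellis (the last theorem of Section~\ref{sec:radixn}, the case $\RADIXI=2$ of Eq.~\ref{eq:general:S}), and then argue directly on the $2$-bit ``fluid'' that plays the role of the state in that reduced trellis. First I would fix a dragonfly $\BUTI$ and, via that theorem, forget the ``bubble'' entirely: each state of the dragonfly is identified with its local state index $y\in\{0,1,2,3\}$, viewed as a $2$-bit shift-register content $(y_1,y_0)$ with $y_1$ the MSB. As in the proof of Theorem~\ref{theorem:r2:l2g}, a branch in this reduced trellis shifts one fresh input bit $u$ into the MSB and shifts the LSB out, so the branch on input $u$ sends $(a,b)$ to $(u,a)$.

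Next I would compose two consecutive branches, which is exactly what a path across a Radix-$4$ dragonfly is: a sequence left state $\to$ middle state $\to$ right state. Applying inputs $u_1$ then $u_2$ sends a left state $(a,b)$ to the middle state $(u_1,a)$ and then to the right state $(u_2,u_1)$. The crucial observation is that the right state $(u_2,u_1)$ depends only on the input pair $(u_1,u_2)$ and not on the left state $(a,b)$ at all --- this is precisely the bubble picture, in which by stage $t+2$ the entire pre-bubble portion has been overwritten by fresh input bits. Since $(u_1,u_2)\mapsto(u_2,u_1)$ is a bijection from the four input pairs onto the four right states, for each pair (left state, right state) there is a unique input pair realizing it, and that input pair in turn pins down the middle state $(u_1,a)$ uniquely. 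Hence there is exactly one path, which is the claim. (Equivalently, one may phrase this as a counting argument: each left state has $2$ outgoing branches and each middle state has $2$ outgoing branches, giving $4$ length-$2$ paths out of a left state, and the computation above shows their $4$ right endpoints are distinct; since there are only $4$ right states, each is reached exactly once.)

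The main obstacle is not mathematical depth but bookkeeping: getting the shift-register bit order right and making sure the reduced branch relation genuinely coincides with the branch relation of the full trellis restricted to the dragonfly --- but that coincidence is exactly the content of the cited theorem, so little work remains. As a fallback sanity check I would verify the same statement directly from the explicit index formulas in Eq.~\ref{eq:radix4:S} together with the global branch rule (a branch $s\to s'$ exists iff $\lfloor s/2\rfloor \equiv s' \pmod{2^{\K-2}}$), confirming for each $i_a$ and each $j_c$ that exactly one $m_b$ satisfies both $i_a\to m_b$ and $m_b\to j_c$; however, I would present the shift-register / bubble argument as the main line, since it is transparent and makes clear why the analogous statement holds for Radix-$2^{\RADIXI}$ dragonflies as well.
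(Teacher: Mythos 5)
Your proof is correct, but it takes a different route from the paper. The paper's own proof is essentially an appeal to Fig.~\ref{fig:tree}: it observes that the sub-graph of all paths leaving a fixed left state and the sub-graph of all paths entering a fixed right state are trees, and concludes uniqueness of the connecting path from that picture. You instead invoke the reduction theorem of Section~\ref{sec:radixn} (a Radix-$2^{\RADIXI}$ dragonfly has the connectivity of a $2^{\RADIXI}$-state trellis with $\K=\RADIXI+1$) and argue algebraically on the $2$-bit reduced state: two steps send $(a,b)\mapsto(u_1,a)\mapsto(u_2,u_1)$, the right endpoint is a bijective function of the input pair alone, and the input pair then fixes the middle state. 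This buys you something the paper's figure-based argument leaves implicit: that the four leaves of the out-tree of a left state are four \emph{distinct} right states (being trees alone would not force a unique left-to-right path without that distinctness or a counting step), and your argument supplies exactly that, as does your counting variant. It also generalizes verbatim to Radix-$2^{\RADIXI}$ ($2^{\RADIXI}$ input words of length $\RADIXI$ map bijectively onto the $2^{\RADIXI}$ right states, so each left--right pair is joined by exactly one path), whereas the paper's proof is tied to inspecting the $4$-state figure. Your fallback check via Eq.~\ref{eq:radix4:S} and the branch rule is consistent with the paper's conventions (the new input bit becomes the MSB, the LSB is shifted out, as in the proof of Theorem~\ref{theorem:r2:l2g}), so the bookkeeping concern you raise is resolved.
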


\begin{proof}
	 Fig.~\ref{fig:tree}(a) shows a sub-graph of a Radix-4 dragonfly containing all paths originating from a specific left state. On the other hand, Fig.~\ref{fig:tree}(b) shows a similar sub-graph including all paths towards a particular right state. It can be observed that all of these sub-graphs are trees. It means only one path can start from a specific left state and end at a specific right one.
\end{proof}

\begin{figure}[h]
	\centering
	\includegraphics[width=0.9\columnwidth]{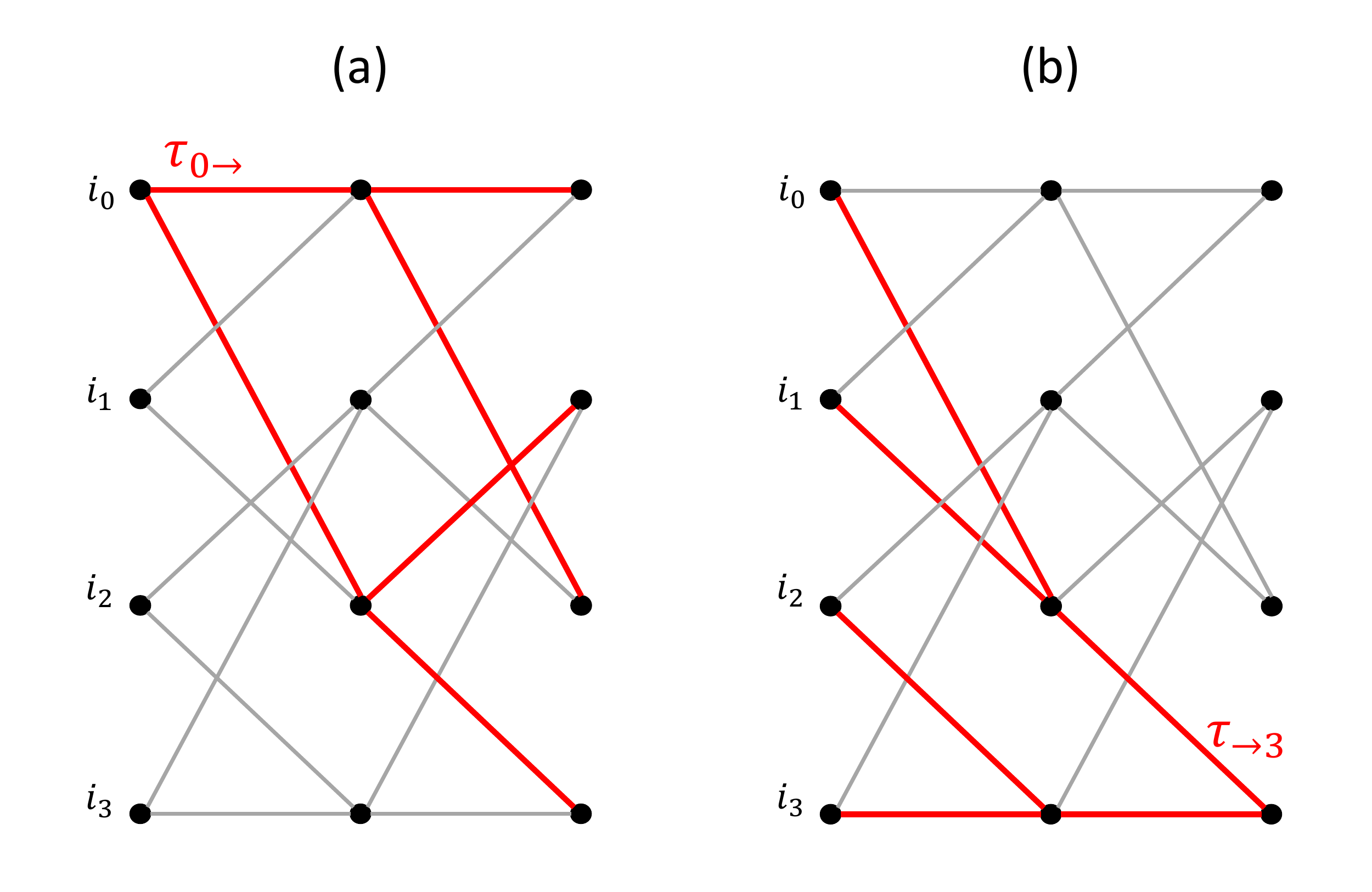} 
	\caption{tree patterns of a 4-state dragonfly. (a) a initial-stage-rooted tree. (b) a final-stage-rooted tree}
	\label{fig:tree}
\end{figure}

Such trees are denoted as $\TREEI{i}$ for trees rooted at the left stage and $\TREEF{i}$ for trees rooted at the right stage where $i$ is the root state. For instance, Fig.~\ref{fig:tree} shows $\TREEI{0}$ and $\TREEF{3}$ in a 4-state trellis.

\begin{corollary}
	 In a Radix-4 dragonfly, the unique path between each pair of left and right states can be considered as a super-branch and middle states will be eliminated. Fig.~\ref{fig:bipartite}, which depicts such representation, shows that the dragonfly is a complete bipartite graph.
\end{corollary}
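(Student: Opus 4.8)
The plan is to obtain the corollary directly from the preceding theorem, which asserts that between every left state and every right state of a Radix-4 dragonfly there is exactly one path. First I would make the notion of a super-branch precise. For an ordered pair $(i_a, j_b)$ of a left state $i_a$ and a right state $j_b$, define the super-branch $i_a j_b$ to be the unique $i_a \to j_b$ path granted by the theorem, together with the data it carries: a $2$-bit super-input formed by concatenating the two branch inputs $\BI$ along the path, and a $2\B$-bit super-output formed by concatenating the two branch outputs $\BO$ along the path. Uniqueness of the path makes all of this data well defined, so discarding the middle state $m$ and replacing the two-hop path by one super-branch loses nothing; the two applications of the forward recursion (Eq.~\ref{eq:radix2}) can then be restated over a pair of stages in terms of super-branch metrics.

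Next I would verify that the collapsed graph on the $4$ left states together with the $4$ right states is precisely $K_{4,4}$. By construction a super-branch always joins a left state to a right state, so there are no edges inside either part and the graph is bipartite with the stated bipartition. For completeness I must show that each of the $4 \times 4 = 16$ left–right pairs carries a super-branch, and here I would reuse the tree structure from the proof of the theorem (Fig.~\ref{fig:tree}). The sub-graph $\TREEI{i}$ of all paths leaving a fixed left state $i$ is a binary tree of depth $2$, hence has exactly $4$ leaves; by Eq.~\ref{eq:radix4:S} the right states are the four distinct values $j_0, j_1, j_2, j_3$, and the $4$ leaves of $\TREEI{i}$ cannot repeat, since a repeat would give two distinct length-$2$ paths with the same endpoints, contradicting the theorem. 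Thus the leaves are forced to be all four right states, every left state reaches every right state, all $16$ edges are present, and the dragonfly collapses to the complete bipartite graph.

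The one point I would treat most carefully — effectively the only obstacle — is the existence half of the statement ``unique path'': the theorem stresses uniqueness, whereas completeness of $K_{4,4}$ also requires a path for every pair. The leaf-counting argument above is exactly what supplies existence (a depth-$2$ binary tree has $4$ leaves, which are then pinned down to be the $4$ right states), and the symmetric argument through the final-stage-rooted trees $\TREEF{j}$ yields the same count from the right side and can be cited as a cross-check. Everything else is pure bookkeeping: relabelling two-hop paths as single edges and reading off that the resulting incidence pattern is that of $K_{4,4}$.
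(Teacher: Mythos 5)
Your proposal is correct and follows essentially the same route as the paper, which treats the corollary as an immediate consequence of the preceding theorem and its tree sub-graphs ($\TREEI{i}$, $\TREEF{j}$ in Fig.~\ref{fig:tree}): collapse each unique two-hop path into a super-branch and read off the complete bipartite structure. Your explicit leaf-counting argument for the existence half (every left state reaches all four right states) is a detail the paper leaves implicit in the phrase ``only one path,'' so it is a welcome but not divergent addition.
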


As Fig.~\ref{fig:bipartite} shows, super-branch outputs are denoted as $\BOBRF{\BUTI}{i,j}$ where $i$ and $j$ are respectively initial and final states.

\begin{figure}[h]
	\centering
	\includegraphics[width=0.8\columnwidth]{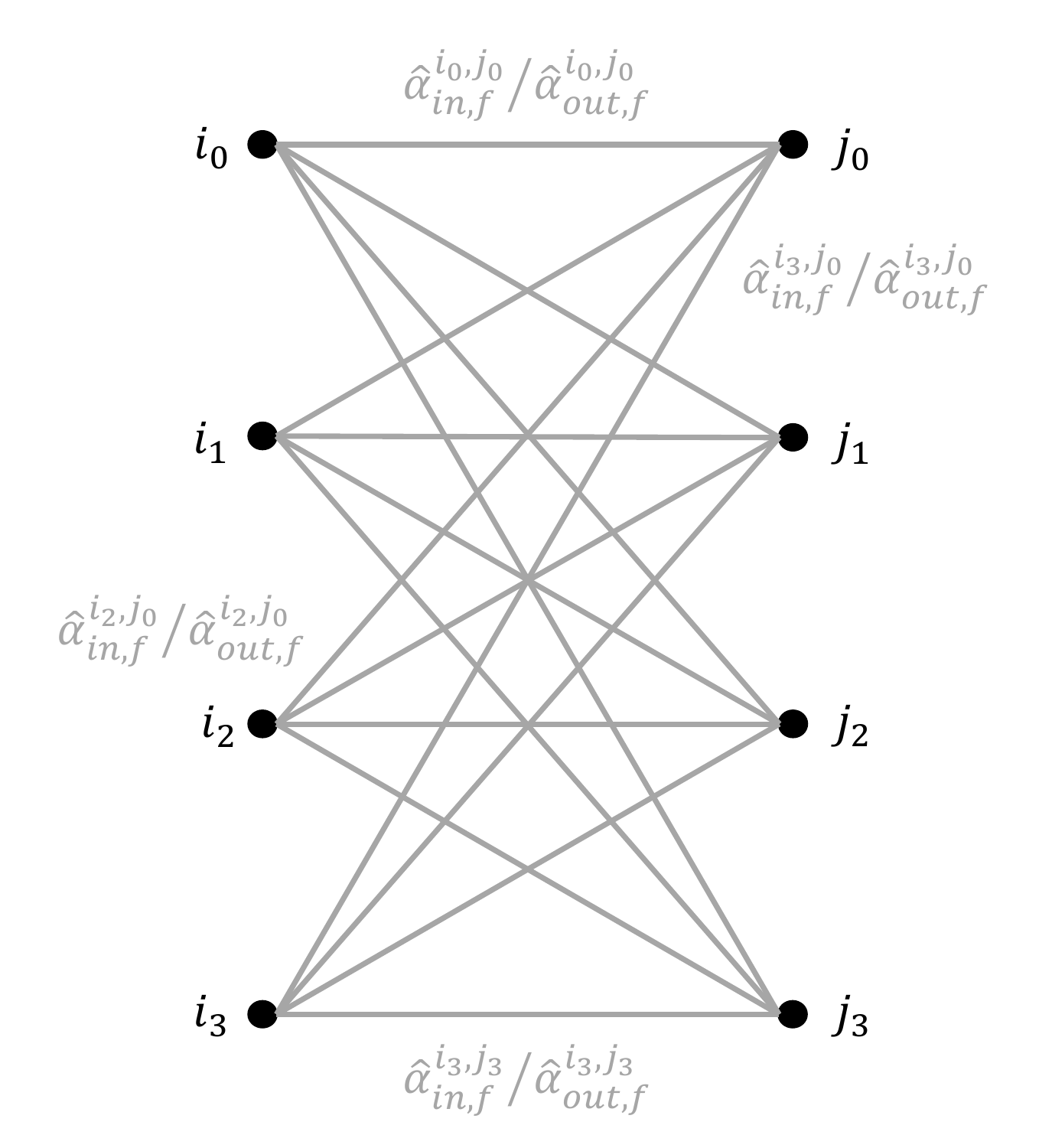} 
	\caption{bipartite representation of a 4-state dragonfly}
	\label{fig:bipartite}
\end{figure}

\begin{theorem}
	All super-branch outputs of a given dragonfly are related to one another. Specifically, the relation between the first super-branch output and other ones is shown in Eq.~\ref{eq:outputRelation}.
	\begin{equation}
	\forall i,j<2^{\RADIXI} \phantom{a} \exists func(x) : \BOBRF{\BUTI}{i,j} = func(\BOBRF{\BUTI}{i_0,j_0})
	\label{eq:outputRelation}
	\end{equation}
\end{theorem}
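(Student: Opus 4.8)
The plan is to reduce the statement to the GF$(2)$-linearity of the convolutional encoder, mirroring the proof of the radix-$2$ output relation (Eq.~\ref{eq:radix2:boRelatBegin}--\ref{eq:radix2:boRelatEnd}) but carried across the two stages that a radix-$4$ dragonfly spans. First I would unfold the definition of $\BOBRF{\BUTI}{i,j}$: by the preceding theorem there is a unique path $i\to m\to j$ through the dragonfly (the trees of Fig.~\ref{fig:tree}, collapsed to the bipartite graph of Fig.~\ref{fig:bipartite}), and the super-branch output is the concatenation of the two ordinary branch outputs along that path, a $2\B$-bit word. By Eq.~\ref{eq:encoder}, each of its bits is the XOR of a fixed subset of the input bits held in the encoder shift register at the corresponding stage, weighted by the generator polynomial $g$ for that output position.

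Next I would split the input bits feeding the whole super-branch into the ones that are constant over the dragonfly and the ones that vary, using the ``Model of Bubble and Fluid'' corollary together with Eq.~\ref{eq:general:S} specialised to $\RADIXI=2$, i.e.\ Eq.~\ref{eq:radix4:S}. Those bits are the $\K-1$ bits of the left state $i$ plus the two bits shifted in over the two stages; of the $\K-1$ left-state bits the $\K-3$ ``bubble'' bits are common to every state of the dragonfly and encode $\BUTI$, the two ``fluid'' low bits of $i$ are exactly its local index, and the two freshly shifted-in bits are exactly the local index of $j$ (which is precisely what the columns of Eq.~\ref{eq:radix4:S} record). Since every output bit is GF$(2)$-linear in all of these bits, it decomposes as an XOR of three terms: a contribution of the bubble $\BUTI$ that is identical for all $i,j$ in the dragonfly, a contribution of the two fluid bits of $i$, and a contribution of the two new bits encoding $j$. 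Evaluating at $i=i_0=4\BUTI$ and $j=j_0=\BUTI$ kills the last two terms, so $\BOBRF{\BUTI}{i_0,j_0}$ equals the bubble contribution alone; hence for every $i,j<4$ we get $\BOBRF{\BUTI}{i,j}=\BOBRF{\BUTI}{i_0,j_0}\oplus c_{i,j}$, where the $2\B$-bit mask $c_{i,j}$ depends only on $i$, $j$ and the generator polynomials and not on $\BUTI$. Taking $func(x)=x\oplus c_{i,j}$ (in the $\pm1$ representation, the corresponding coordinatewise sign flip) yields Eq.~\ref{eq:outputRelation}.

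An equivalent route is to apply the single-stage relation of Eq.~\ref{eq:radix2:boRelatBegin}--\ref{eq:radix2:boRelatEnd} to each of the two constituent butterfly stages and compose, or to run an induction on $\RADIXI$ that peels off one butterfly stage at a time; the latter is the form one would use for the general radix-$2^{\RADIXI}$ version of the statement. I expect the main obstacle to be the cross-stage bookkeeping: a bit that is ``new input'' at stage $t$ (part of the right state of the first butterfly) becomes a ``state'' bit of the second butterfly at stage $t+1$, so one must line up the bit orderings dictated by Eq.~\ref{eq:radix4:S} for $i$, $m$ and $j$ with the shift-register positions in Eq.~\ref{eq:encoder} to be sure the three XOR contributions are attributed to the correct local indices. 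Once that correspondence is pinned down---and it is forced by Eq.~\ref{eq:general:S}---the linearity argument is routine and, as a bonus, it exhibits $func$ explicitly as ``XOR with a $\BUTI$-independent constant,'' generalising the radix-$2$ corollary Eq.~\ref{eq:radix2:boRelatNot}.
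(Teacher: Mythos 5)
Your proposal is correct and follows essentially the same route as the paper: both rest on the GF$(2)$-linearity of Eq.~\ref{eq:encoder} together with the bubble-and-fluid decomposition, observing that for the all-zero local indices $(i_0,j_0)$ the pre-bubble, post-bubble and input contributions vanish so that the main (super-)branch output isolates the bubble term, whence every other output is that term XORed with a mask depending only on the local indices. Your write-up is in fact slightly more explicit than the paper's on two points the paper glosses over---the stage-by-stage composition of the two constituent branch outputs into the $2\B$-bit super-branch word, and the identification of $func$ as XOR with a $\BUTI$-independent constant (a coordinatewise sign flip in the $\pm1$ representation)---but these are elaborations of the same argument, not a different one.
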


\begin{proof}
	Branch outputs are calculated with Eq.~\ref{eq:encoder}. In order to explore their relation in a dragonfly branch outputs, a reformed representation should be used. Eq.~\ref{eq:general:output1} shows branch outputs for one of the polynomials and in a specific stage based on the concept of the bubble and fluid. To be expressed in detail, every generative polynomial is divided into four parts, $G_0$, $G_1$, $G_2$ and $G_3$ that are applied to input bit, pre-bubble, bubble and post-bubble, respectively. The length of each part is different in different stages of a dragonfly.
	
	\begin{equation}
	\BOB{\BUTI}{i,j} = G_0.in_t \oplus G_1.B_{pre} \oplus G_2.B \oplus G_3.B_{post}
	\label{eq:general:output1}
	\end{equation}
	
	Since fluid represents local state index, for all branches from zero to zero, both pre-bubble and post-bubble are zero. The input bit is also zero. So
	
	\begin{equation}
	\BOB{\BUTI}{i_0,j_0} = G_2.B
	\label{eq:originalOutput}
	\end{equation}
	
	Using Eq.~\ref{eq:general:output1} and Eq.~\ref{eq:originalOutput}, it can be resulted that
	
	\begin{equation}
	\BOB{\BUTI}{i,j} = G_0.in_t \oplus G_1.B_{pre} \oplus \BOB{\BUTI}{i_0,j_0} \oplus G_3.B_{post}
	\label{eq:general:output2}
	\end{equation}
	
	In other words, in every stage of the dragonfly, all branch outputs can be defined given $\BOB{\BUTI}{i_0,j_0}$. This branch from zero to zero can be called the main branch of that stage. 
	
	Eq.~\ref{eq:general:output2} can be generalized to super-branches. As super-branches of a Radix-$2^{\RADIXI}$ dragonfly is composed of $\RADIXI$ branches, a super branch crossing all zeros can be defined as the main super-branch and the output of all super-branches can be obtained from the output of the main super-branch.	
\end{proof}

\section{Employing Tensor Cores For Radix-4}
\label{sec:tensorRadix4}

\subsection{Reformed Formulation}
Preceding the implementation of the Viterbi algorithm with Radix-4 approach, Eq.~\ref{eq:BM} to Eq.~\ref{eq:SP0} should be revised. Eq.\ref{eq:radix4:BM} to Eq.~\ref{eq:radix4:SP0} show the modification. Eq.~\ref{eq:radix4:BM} shows the metric of the super-branch that is twice a branch. Therefore, the summation length is considered $2\B$. Eq.~\ref{eq:radix4:PM} shows the forward procedure to update path metric values using super-branch metrics. The important fact shown in Eq.~\ref{eq:radix4:PM} is that the forward procedure, which is iterative, is performed two stages per iteration. In other words, the number of iterations is half of the Radix-2 version of the algorithm. Eq.~\ref{eq:radix4:SP0} shows storing survivor paths that will be used in the traceback step. As performed after path metric calculation, it is also done two stages per iteration. Thus, the traceback step will be also done two stages per iteration that leads to less access to memory.

\begin{equation}
\BMRF^{i,j}_t = \sum_{b=0}^{2\B-1} (-1)^{\displaystyle \BORF^{i,j}[b]} \times \llr_t[b]
\label{eq:radix4:BM}
\end{equation}

\begin{equation}
\PM^j_t = \max_{i \in \text{prv}(\text{prv}(j))} \big( 
\PM^{i}_{t-2} + \BMRF^{i,j}_t  
\big)
\label{eq:radix4:PM}
\end{equation}

\begin{equation}
\SP^j_t = \argmax_{i \in \text{prv}(\text{prv}(j))} \big( 
\PM^{i}_{t-2} + \BMRF^{i,j}_t  
\big)
\label{eq:radix4:SP0}
\end{equation}

\subsection{Tensor Representation}
The matrix representation of the algorithm is similar to Radix-2 except for the size of the matrices. A Radix-4 dragonfly, in its bipartite representation, consists of 16 super-branches with $2\B$-bit-wide outputs. As a result, $\BOBMAT{\BUTI}$ is a $16\times 2\B$ matrix and both $\PMMAT_{t-1,\BUTI}$ and $\DPMMAT_{t,\BUTI}$ are $16\times 1$ matrices. $\LLR_t$ is the same as the Radix-2 version while it has $2\B$ rows. Eq.~\ref{eq:radix4:branchOutMat} shows the general format of $\BOBMAT{\BUTI}$. This matrix can be considered as it is a combination of some partial matrices $P_j$.  $P_j$ is dedicated to $\TREEF{j}$ which is a right-stage-rooted tree. Eq.~\ref{eq:radix2:PMMatI} and Eq.~\ref{eq:radix2:PMMatF} show $\PMMAT_{t-1,\BUTI}$ and $\DPMMAT_{t,\BUTI}$ that are arranged according to $\BOBMAT{\BUTI}$. 

%\red{The ordering of the matrices could be altered like their Radix-2 versions explained in Sec.~\ref{sec:tensorRadixII} and Eq.~\ref{eq:radix2:reordered}. This ordering is also preferred because of the same reason. That is the fact that this ordering sequences entries that should be used to calculate the maximum one, successively and it leads to faster memory access. These successive entries are path metrics in $\DPMMAT_{\BUTI}$, that are related to the same final state as shown in Fig.~\ref{eq:radix2:PMMatF}.}

\begin{equation}
\BOBMATRF{\BUTI} =
\begin{bmatrix}
P_0 \\
P_1 \\
\vdots \\
P_{(2^{\RADIXI})-1}
\end{bmatrix}
\left. \Bigg| \right.  P_j =
\underbrace{ \begin{bmatrix}
\BOBLRF{i_0,j}[0] & \BOBLRF{i_0,j}[1] & \cdots \\
\BOBLRF{i_1,j}[0] & \BOBLRF{i_1,j}[1] & \cdots \\
\BOBLRF{i_2,j}[0] & \BOBLRF{i_2,j}[1] & \cdots \\
\vdots & \vdots & \ddots
\end{bmatrix}}_{\RADIXI \times \B~columns}
\left.\vphantom{\begin{matrix} a\\ a\\ a\\ a\\ a \end{matrix}}\right\}2^{\RADIXI}
\label{eq:radix4:branchOutMat}
\end{equation}

\begin{equation}
\PMMAT_{t-2,\BUTI} =
\begin{bmatrix}
P_0 \\
P_1 \\
\vdots \\
P_{2^{\RADIXI}-1}
\end{bmatrix}
\left. \Bigg| \right.  P_j =
\begin{bmatrix}
\vphantom{\begin{matrix} a \\ a \end{matrix}} \PM_{t-2,f}^{i_0} \\
\vphantom{\begin{matrix} a \\ a \end{matrix}} \PM_{t-2,f}^{i_1} \\
\vdots \\
\end{bmatrix}
\left.\vphantom{\begin{matrix} a\\ a\\ a\\ a\\ a\\ a \end{matrix}}\right\}2^{\RADIXI}
\label{eq:radix2:PMMatI}
\end{equation}

\begin{equation}
\DPMMAT_{t,\BUTI} =
\begin{bmatrix}
P_0 \\
P_1 \\
\vdots \\
P_{2^{\RADIXI}-1}
\end{bmatrix}
\left. \Bigg| \right.  P_j =
\begin{bmatrix}
\vphantom{\begin{matrix} a \\ a \end{matrix}} \DPM_{t,f}^{i_0,m,j} \\
\vphantom{\begin{matrix} a \\ a \end{matrix}} \DPM_{t,f}^{i_1,m,j} \\
\vdots
\end{bmatrix}
\left.\vphantom{\begin{matrix} a\\ a\\ a\\ a\\ a\\ a \end{matrix}}\right\}2^{\RADIXI}
\label{eq:radix2:PMMatF}
\end{equation}

\subsection{Mapping to Tensor Cores}
The implementation of the Viterbi algorithm is more compatible with $16\times 16$ tensor cores using Radix-4 approach because one dimension of the matrices is exactly the same as that of tensor cores. Fig.~\ref{fig:tensor16x16:simple} shows a code of rate $\frac{1}{2}$. In such a code, $\BOBMAT{\BUTI}$ is a $16 \times 4$ matrix four of which can be accommodated in matrix $A$. Matrices $B$ and $C$ are filled with $\LLR$ and $\PMMAT_{t-2,\BUTI}$ in four columns. The result matrix $D$ will have four columns of matrices $\DPMMAT_{t,\BUTI}$ in the same columns as the matrix $C$ that is depicted. Fig.~\ref{fig:tensor16x16:simple} illustrates that each matrix operation performs calculations of four dragonflies. Furthermore, there are $2^{\K-1} \div 4 = 2^{\K-3}$ dragonflies in every two stages. As a result, every two stages are processed in $2^{\K-3} \div 4 = 2^{\K-5}$ tensor operations. In other words, $Q=2^{\K-6}$ tensor operations are needed per stage. It is the same as Radix-2. However, less memory access are gained, since the number of iterations are half the Radix-2.

\subsection{Optimization}
\subsubsection{Possibility of Previous Method}
%The idea that optimized Fig.4 to Fig.5 is not possible here:
Eq.\ref{eq:outputRelation} shows that only the first row of the matrix $\BOBMAT{\BUTI}$ defines the whole. It means that $2^{\RADIXI \times \B}$ distinct matrices exist in a code, as $\RADIXI \times \B$ is the number of columns in $\BOBMAT{\BUTI}$. On the other hand, since each dragonfly accommodates $2^{\RADIXI}$ states per stage, there are $2^{\K-1} \div 2^{\RADIXI} = 2^{\K-1-\RADIXI}$ dragonflies in each stage. As a result, If there are more than $2^{\RADIXI \times \B}$ dragonflies that means $\RADIXI \times \B < \K-1-\RADIXI$, some dragonflies have the same $\BOBMAT{\BUTI}$. In that case, unused columns in Fig.~\ref{fig:tensor16x16:simple} can be used for a dragonfly with the same $\BOBMAT{\BUTI}$ like the method used in Radix-2 and depicted in Fig.~\ref{fig:tensor4x4:opt}. Implemented in the code of $k=7$ and $polynomial=(171,133)$ which is a most used, all $2^{7-1-2}=16$ dragonflies are all $2^{2 \times 2}=16$ possible ones and such optimization cannot be utilized.

\subsubsection{A Novel Method}
%Based on Fig. 11 (BOMATs), we can re-order the elements in different columns of C and D:
In order to optimize this particular code, a novel strategy is employed that fills whole the matrices $C$ and $D$. Fig.~\ref{fig:BOMATs} is a table of 16 column that each of them is a $\BOBMAT{\BUTI}$ of a dragonfly. each entry is the decimal representation of the 4-bit super-branch output. Columns with the same color are the same set with different ordering. In other words, they are permuted version of each other. Similar columns fall into the same dragonfly group. Dragonfly groups are denoted by $\BG$. Eq.~\ref{eq:bg0} to Eq.~\ref{eq:bg3} show dragonfly groups in Fig.\ref{fig:BOMATs}.

\begin{align}
\label{eq:bg0}	\BG_0 &= \{\BOBMAT{0}, \BOBMAT{2}, \BOBMAT{8}, \BOBMAT{10}\}\\
\label{eq:bg1}	\BG_1 &= \{\BOBMAT{1}, \BOBMAT{3}, \BOBMAT{9}, \BOBMAT{11}\}\\
\label{eq:bg2}	\BG_2 &= \{\BOBMAT{4}, \BOBMAT{6}, \BOBMAT{12}, \BOBMAT{14}\}\\
\label{eq:bg3}	\BG_3 &= \{\BOBMAT{5}, \BOBMAT{7}, \BOBMAT{13}, \BOBMAT{15}\}
\end{align}

In a dragonfly group, tensor operation can be performed by only one $\BOBMAT{\BUTI}$ if $\PMMAT_{t-2,f}$ and $\DPMMAT_{t,f}$ of other dragonflies are permuted. To be explained in detail for $\BG_0$ as an example, $\BOBMAT{0}$ can be used for all four dragonflies in the group. In order to do that, the permutation function mapping $\BOBMAT{f}$ to $\BOBMAT{0}$ should be applied to $\PMMAT_{t-2,f}$ and $\DPMMAT_{t,f}$. 
Consequently, all 16 dragonflies can be processed using four matrices as $\BOBMAT{}$ and it means that the calculation of all 16 dragonflies, the entire trellis, can be performed in a single tensor operation as depicted in Fig.~\ref{fig:tensor16x16:opt}. Since this operation is for two stages, $Q=0.5$ tensor operation is needed per stage.

\subsubsection{A Deep Interpretation}
Although this optimization seems complicated, an intuitive description of this permutation is achieved after a profound observation. In every two peer matrices of the same dragonfly group, the first four rows are the same subsets and it is also true for following four rows. It can also be observed that the permutation for all subsets are the same. In other words, in all pairs of $\BOBMAT{f}$s in a dragonfly group, $P_j$s that are subsets of $\BOBMAT{f}$s according to Eq.~\ref{eq:radix4:branchOutMat}, are the same subsets and the permutation function is also the same for different $j$s. Since $P_j$ is associated with $\TREEF{j}$ and the permutation in $P_j$ is equivalent to permutation in initial states of $\TREEF{j}$ and also this permutation is the same for all trees, it is enough to consider that initial states of dragonflies are permuted and then all four dragonflies of a dragonfly group will have the same $\BOBMAT{}$. The permutation in initial states for all dragonflies is shown in Fig.~\ref{fig:permutation}.

\begin{figure}[t]
	\centering
	\includegraphics[width=0.9\columnwidth]{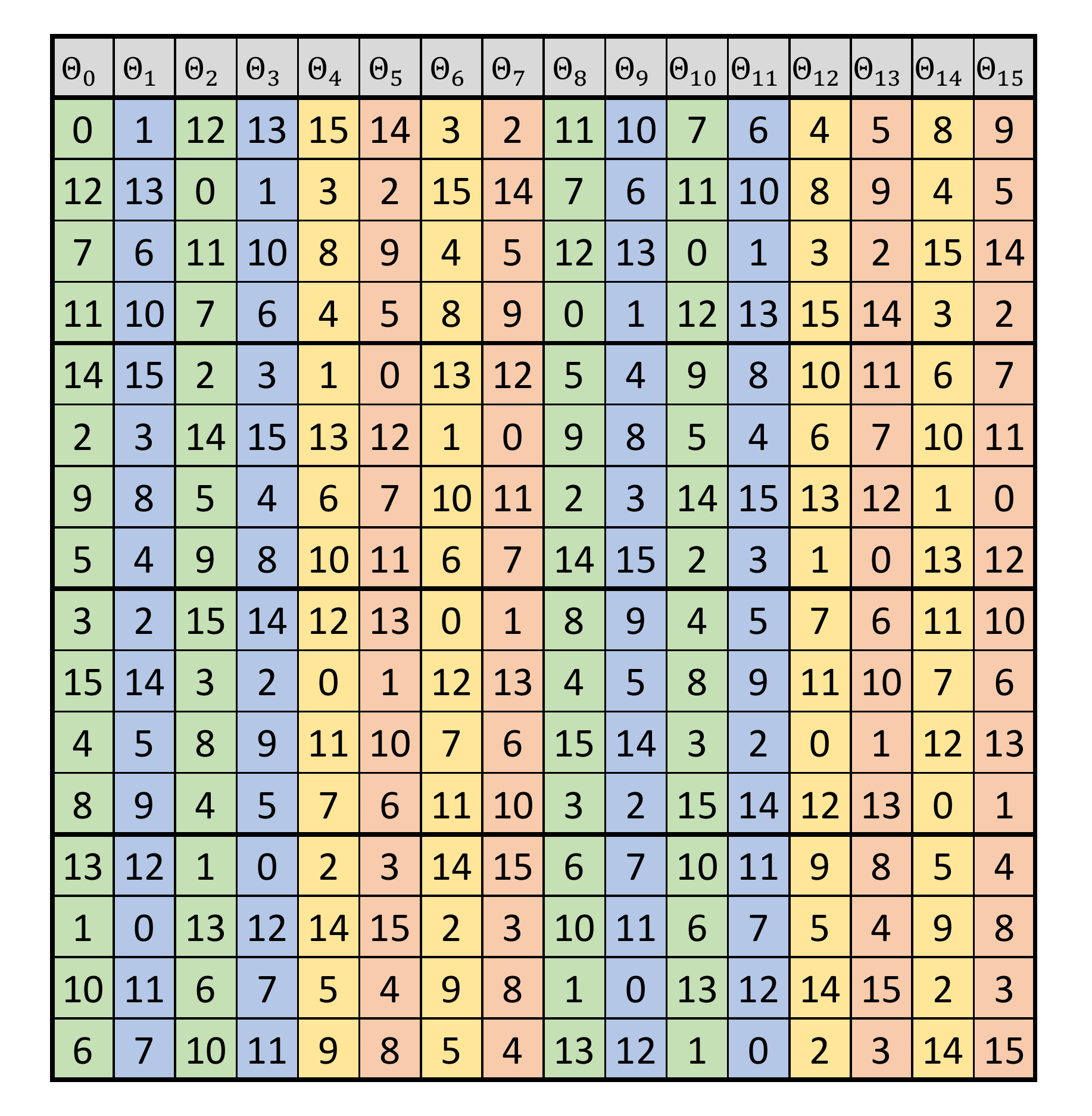} 
	\caption{all $\BOBMAT{\BUTI}$ matrices in the code of $\K=7$ and $polynomial=(171,133)$. This is not a tensor matrix! }
	\label{fig:BOMATs}
\end{figure}

\begin{figure}[t]
	\centering
	\includegraphics[width=0.9\columnwidth]{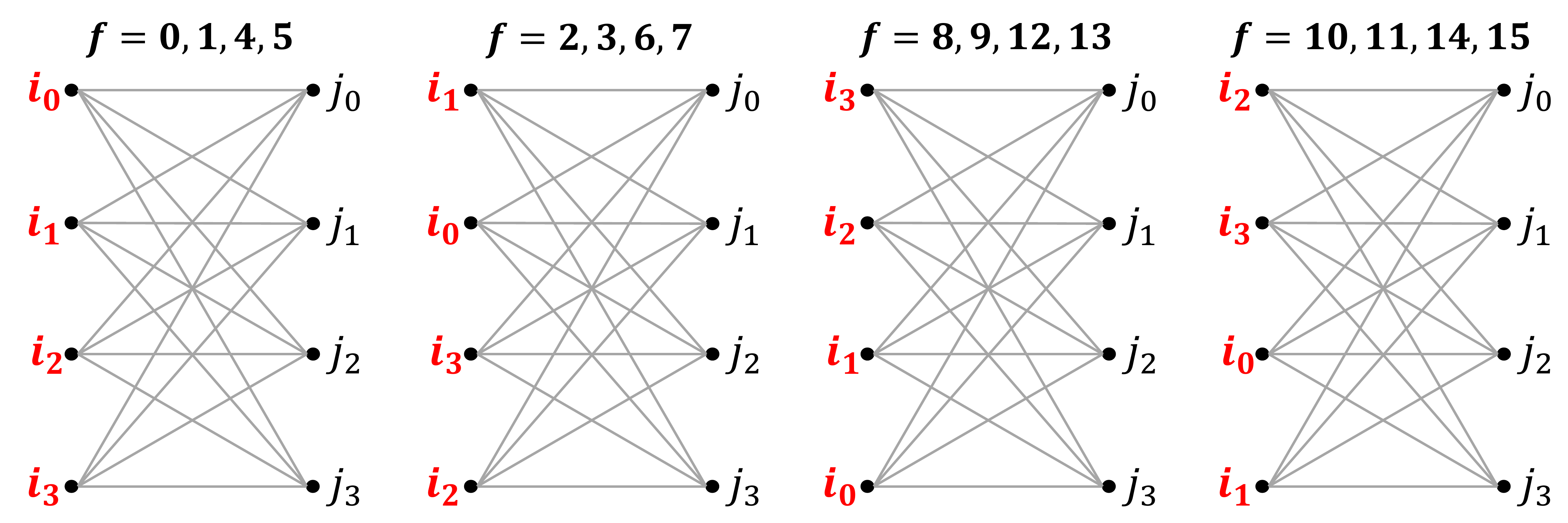} 
	\caption{The permutation in left states of dragonflies}
	\label{fig:permutation}
\end{figure}
\section{Experimental Evaluation}

\subsection{Setup}

The proposed parallel algorithm is implemented in C language in the CUDA framework. The hardware platform is a server with an Intel Xeon CPU operating at $2.5$~GHz and Tesla V100 GPU. 
We employ Ubuntu OS $18.04$, gcc version $7.4$, and CUDA version $10.2$.

We experiment with a widely-used standard convolutional code, namely, $(2,1,7)$, i.e., code rate $\nicefrac{1}{2}$ and constraint length $7$, with generative polynomials $171$ and $133$. This configuration is shown in Fig.~\ref{fig:encoder}.

\subsection{BER Performance}

In order to verify the implementation, the system shown in Fig.~\ref{fig:verifSys} is employed. At the first step, a vector of uniformly distributed bits is generated and, at step $2$, passed to convolutional encoder. This part is the simulated transmitter. Then, at step $3$, encoded bits are transmitted in an AWGN channel with a specific $E_b/N_0$. Assuming that the BPSK modulation is used, the channel simulation is done by adding a vector of normally distributed values with standard deviation of $2^{\nicefrac{-(E_b/N_0)}{20}}$. Having generated a noisy coded vector, at step $4$, the simulated receiver can decode the signal and produce an output vector. At last, comparing the decoder output with the bits generated at the first step, Bit Error Rate (BER) will be obtained. It should be also noted that the BER value is reliable if enough data is generated and tested in the verification system. As a rule of thumb, if a vector of size $n$ is generated in the first step, only the BER value more than $\frac{100}{n}$ will be valid.

\begin{figure}[t]
	\centering
	\includegraphics[width=0.9\columnwidth]{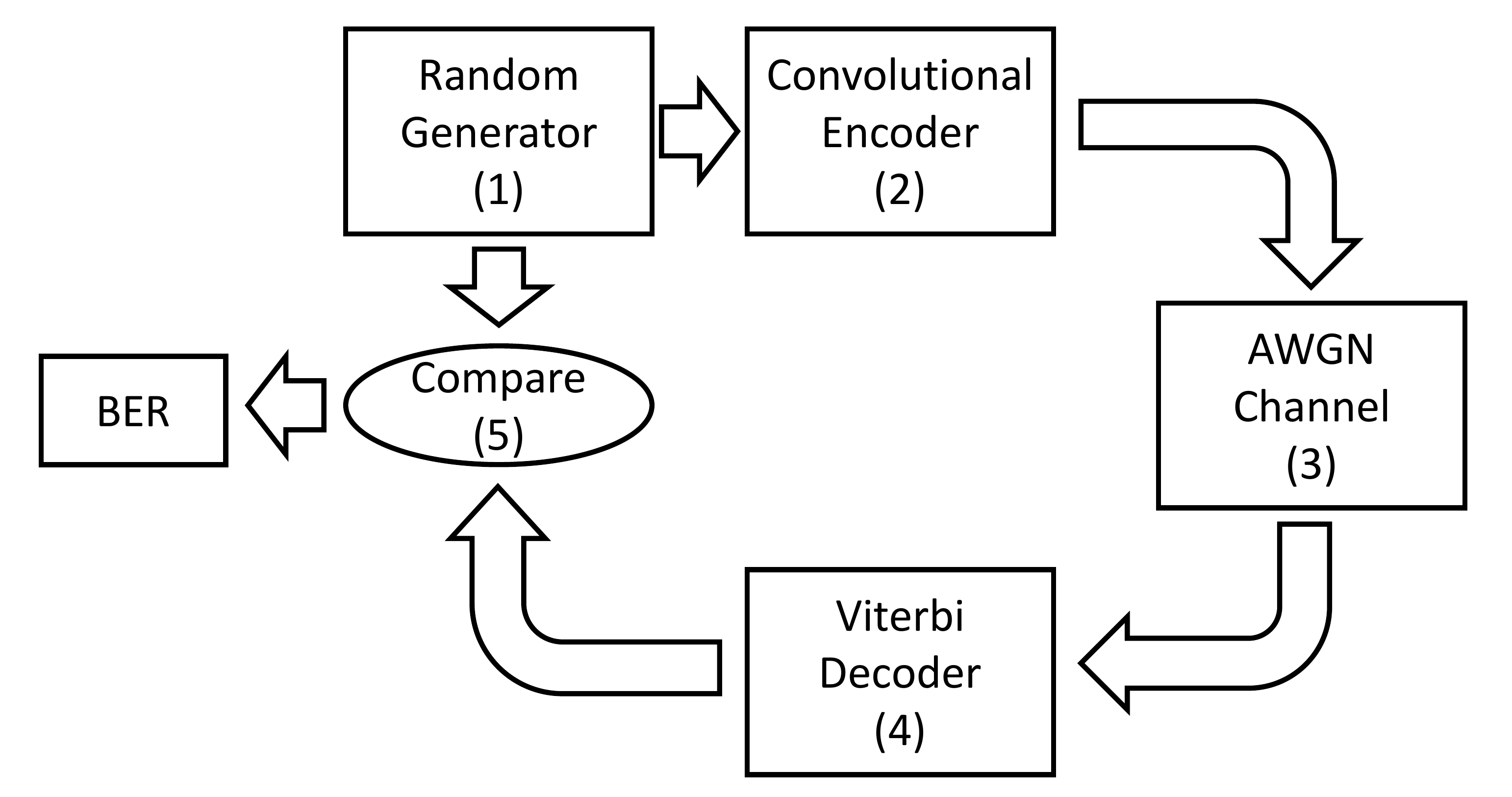} 
	\caption{The block diagram of the verification system}
	\label{fig:verifSys}
\end{figure}

\begin{figure}[t]
	\centering
	\includegraphics[width=0.9\columnwidth]{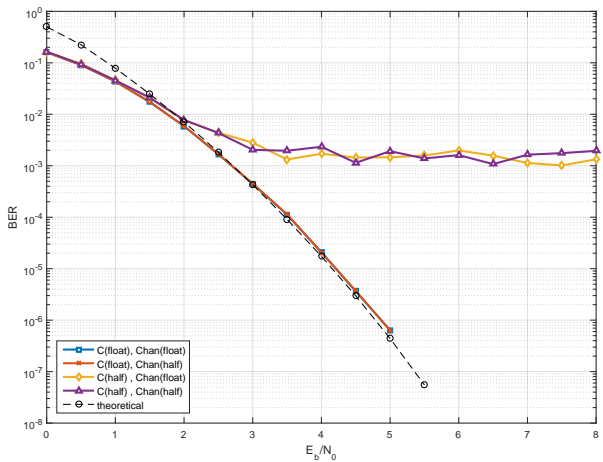} 
	\caption{Comparison between half-precision and single-precision operations in the Viterbi algorithm}
	\label{fig:ber:fh}
\end{figure}

The process shown in Fig.~\ref{fig:verifSys} can produce the BER for a specific $E_b/N_0$. Complete verification is done by drawing the BER curve over a range of $E_b/N_0$ values and comparing it with the theoretical one that can be generated by MATLAB BER tool which is invoked by "bertool" command. Implementation parameters can be tuned this way.

In this implementation, all calculations are done with floating-point variables. There are two types of floating-point variables, half-precision and single-precision. Size of a half-precision variable is $16$ bits which is half of that of a single-precision one. It means that memory size required by a half-precision array is less than a single-precision one, and consequently memory transfer time is decreased and throughput is increased. On the other hand, the result of single-precision operations are more reliable. Therefor, BER can be degraded by half-precision arrays. Thus, this selection is important and affects both throughput and BER. NVIDIA tensor cores provide the option of single-precision matrices only for $C$ and $D$. $A$ and $B$ have only half-precision option to store floating-point values.

In this paper, since $A$ and $B$ are not accumulated, their accuracy does not matter and half-precision arrays do not corrupt data. Nevertheless, $C$, which is also used as $D$, is accumulated along the trellis and is prune to inaccuracy and its preferred precision should be investigated. Another array that can be selected between single-precision and half-precision is the data received from channel. Since this data is going to be stored in $B$, it will be converted to half-precision. As a result, it can be half-precision in the first place.

Fig.~\ref{fig:ber:fh} shows the BER curve regarding different combinations of these two arrays.  Viterbi implementation results under both conditions. Obviously, half-precision tensor operations do not have enough accuracy to achieve satisfactory results. As expected, the data received from channel can be half-precision without any problem. Nonetheless, $C$ must be single-precision.

\subsection{Throughput}

Table.~\ref{table:throughput} shows throughput for different options precision. According to what discussed, $C$ must be single-precision. Therefore, the first two rows of the table are valid. It is also observed that, as expected, if channel is half-precision, throughput increases.

\begin{table}[tp]
\begin{center}
\caption{Decoder Throughput}
\label{table:throughput}
\begin{tabular}{|l|l||l|}
\hline
\textbf{C} & \textbf{channel} & \textbf{Throughput (Gb/s)} \\
\hline
\hline
single & single & $19.5$\\
\hline
single & half & $21.4$\\
\hline
half & single & $20.1$\\
\hline
half & half & $22.2$\\
\hline
\end{tabular}
\end{center}
\end{table}

\begin{figure*}[t]
	\centering
	\includegraphics[height=240mm]{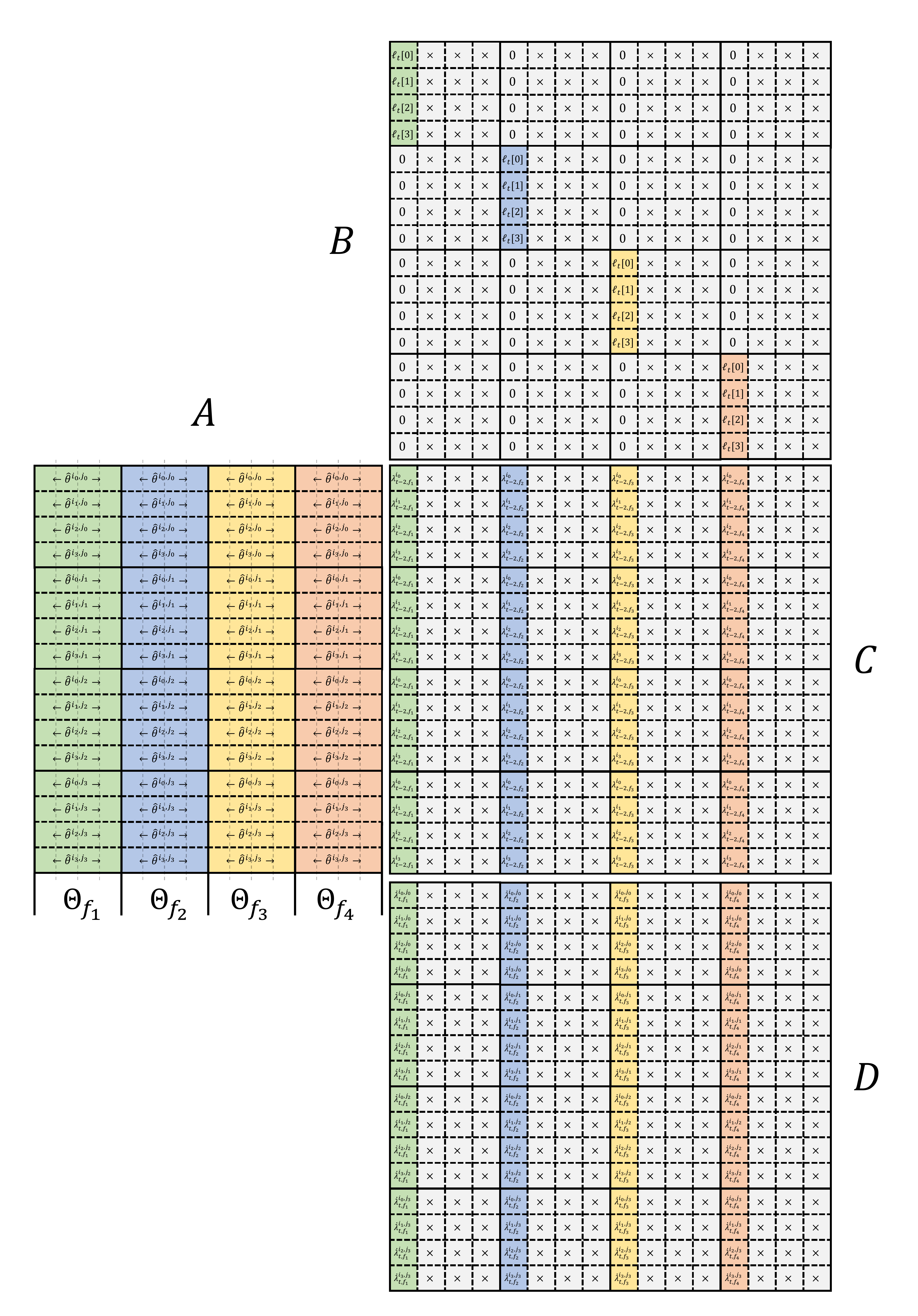} 
	\caption{Viterbi implementation on $16\times 16$ tensor cores based on radix-4 approach.}
	\label{fig:tensor16x16:simple}
\end{figure*}

\begin{figure*}[t]
	\centering
	\includegraphics[height=240mm]{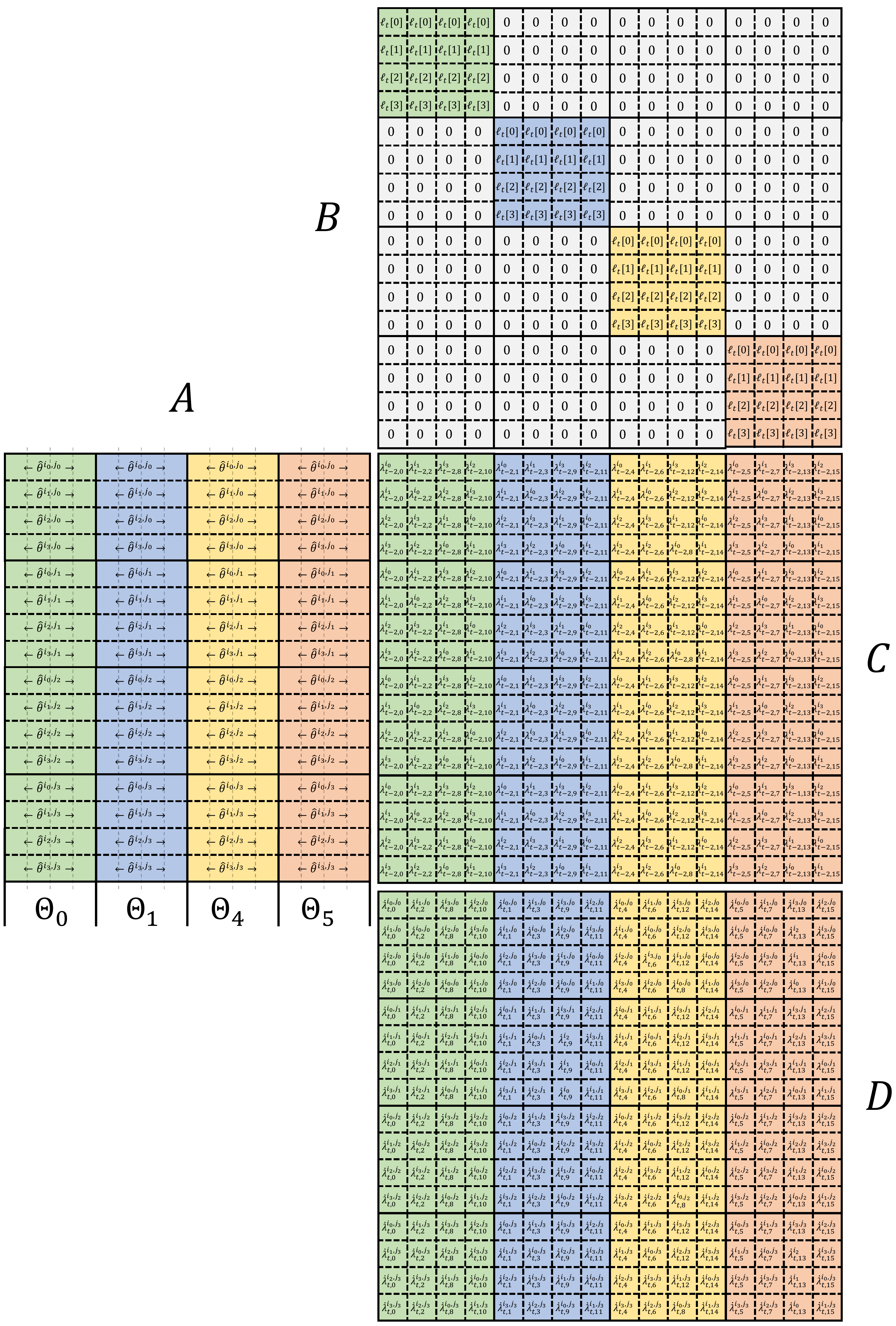} 
	\caption{Viterbi implementation of $\K=7$ and $polynomial=(171,133)$ on $16\times 16$ tensor cores based on radix-4 approach after efficient use of matrix entries.}
	\label{fig:tensor16x16:opt}
\end{figure*}

%\section{Conclusion}
%\label{sec:conc}
%The conclusion goes here...

%===========================================================================

% if have a single appendix:
%\appendix[Proof of the Zonklar Equations]
% or
%\appendix  % for no appendix heading
% do not use \section anymore after \appendix, only \section*
% is possibly needed

% use appendices with more than one appendix
% then use \section to start each appendix
% you must declare a \section before using any
% \subsection or using \label (\appendices by itself
% starts a section numbered zero.)
%

%\appendices
%\section{Proof of the First Zonklar Equation}
%Appendix one text goes here.

% you can choose not to have a title for an appendix
% if you want by leaving the argument blank

% use section* for acknowledgment
%\section*{Acknowledgment}
%The authors would like to thank...

% Can use something like this to put references on a page
% by themselves when using endfloat and the captionsoff option.
\ifCLASSOPTIONcaptionsoff
  \newpage
\fi

\end{document}